\newcommand{\tr}{\mathrm{Tr}}
\newcommand{\cA}{\mathcal{A}}
\newcommand{\cC}{\mathcal{C}}
\newcommand{\cE}{\mathcal{E}}
\newcommand{\cO}{\mathcal{O}}
\newcommand{\cP}{\mathcal{P}}
\newcommand{\cR}{\mathcal{R}}
\newcommand{\cS}{\mathcal{S}}
\newcommand{\cW}{\mathcal{W}}
\newcommand{\m}{\mu}
\newcommand{\n}{\nu}
\newcommand{\Rl}{\rangle}
\newcommand{\Ll}{\langle}
\newtheorem{theorem}{Theorem}
\newtheorem{lemma}{Lemma}
\begin{document}

\title{Universal syndrome-based recovery for noise-adapted quantum error correction}

\author{Debjyoti Biswas}
\email{d4bj@physics.iitm.ac.in}
\affiliation{Department of Physics}
\affiliation{Center for Quantum Information, Communication and Computing, \href{https://ror.org/03v0r5n49}{Indian Institute of Technology Madras}, Chennai 600036, India.}
\author{Prabha Mandayam}
\email{prabhamd@physics.iitm.ac.in}
\affiliation{Department of Physics}
\affiliation{Center for Quantum Information, Communication and Computing, \href{https://ror.org/03v0r5n49}{Indian Institute of Technology Madras}, Chennai 600036, India.}

\begin{abstract}
Quantum error correction (QEC) is an essential tool for quantum computing that  enables reliable information processing in the presence of noise. Syndrome measurements play a central role in QEC, making it possible to unambiguously identify the location and type of errors. While syndrome extraction is natural for conventional QEC protocols, where the errors satisfy certain algebraic constraints \emph{perfectly}, this feature is largely missing in the framework of approximate or noise-adapted QEC. Rather, noise-adapted recovery maps like the Petz map are used in the latter scenario, but implementing such tailored recovery processes on the hardware can be quite challenging. Here,  we address this issue by proposing an algorithmic approach to identifying error syndromes for arbitrary codes and noise processes. We then use our algorithm to develop a variant of the Petz recovery map --- a syndrome-based Petz recovery map --- which can then be implemented via syndrome measurements. We demonstrate the efficacy of our approach in the context of amplitude-damping noise, by constructing the syndrome-based Petz map for the $4$-qubit code. We execute our recovery circuits on IBM quantum hardware to successfully demonstrate break-even performance of a noise-adapted QEC protocol with upto a threefold improvement of the qubit $T_{1}$ times.
\end{abstract}

\maketitle

\section{Introduction}

Noise in quantum computers poses one of the most significant obstacles in achieving reliable and scalable quantum computation. Quantum error correction (QEC) \cite{bm_terhal}  paves the way for fault-tolerant quantum computation by encoding logical qubits into multiple physical qubits and using the properties of these logical states to detect and correct for errors. While standars QEC protocols are \emph{general-purpose}, in the sense that they do not require knowledge of the noise processes in the quantum hardware, having knowledge of the physical noise processes can often aid in designing resource-efficient QEC protocols. 
For example, the smallest quantum code requires five qubits to correct for arbitrary single-qubit errors~\cite{laflamme1996perfect}, whereas these exist \emph{noise-adapted} quantum codes that can correcting for purely amplitude-damping (AD) noise using only four qubits \cite{leung}.

However, such noise-adapted QEC codes (see~\cite{jayashankar2023quantum}) for a recent review) may not always satisfy the well known the Knill-Lafllame (KL) conditions~\cite{KL-cond} exactly, but may only do so perturbatively. The standard bit-flip code and phase-flip three-qubit codes can be thought of as noise-adapted codes that satisfy the KL condition perfectly. On the other hand, The four-qubit Leung code \cite{leung}, which is tailored to protect against AD noise, satisfy a perturbed form of the KL conditions. Such codes that perturbatively satisfy the KL conditions belong to the category of approximate quantum error correcting codes. 

It is known that approximate quantum error correction (AQEC) schemes typically require lesser number of qubits \cite{fletcher2008channel} (or qudits \cite{dutta2025qudit}) to protect a single qubit (or qudits) against a known (or unknown) noise process. However, the framework brings with it other new challenges. One main challenge is that the noisy subspaces resulting from the action of the noise on the codespace  -- also referred to as the syndrome subspaces -- overlap and are non-orthogonal. Therefore, a generic AQEC protocol does not adopt the standard syndrome-based approach QEC; 
rather, alternate recovery schemes based on dynamical recovery maps have to be constructed, depending on the specific choice of code and noise process. 

Numerical approaches involving semi-definite programming \cite{fletcher2008channel} have been proposed to identify the optimal recovery and optimal codes \cite{liang_jiang_opt_code} in the AQEC framework. Recently, variational quantum algorithmic approaches have also been suggested \cite{grassl_vqe_codes, vqe_recs}, as also machine learning \cite{ML_code} based protocols to search for good AQEC codes.

On the analytical front, a universal recovery scheme based on the Petz map was shown to be near-optimal \cite{barnum2002, hkn_pm2010} for arbitrary codes and noise processes. The Petz map can also be the optimal decoder under certain conditions, as pointed out recently in~\cite{liang_jiang_optimality}. However, the known approaches for implementing the Petz map are not suitable for present day hardware since they require high depth circuits. In fact, the best circuit implementations for the Petz map \cite{gilyen2022_petz,biswas2024noise} have a gate count that scales exponentially with the length of the code. 


Figuring out simple decoding schemes or recovery operations has thus proved to be a significant bottleneck in implementing the AQEC framework. In our work, we address this issue by proposing a subspace orthogonalization algorithm that makes syndrome detection possible for the most general AQEC code. We show that the performance of the Petz map can be improved significantly through our subspace orthogonalisation algorithm, and for some cases, the performance of the Petz can be made optimal when the code and noise process do not satisfy the conditions in \cite{liang_jiang_optimality}.  


Next, we address the challenge of implementing the noise-adapted recovery on quantum computing hardware. We show that the subspace orthogonalization algorithm helps to design a hardware-efficient circuit for the Petz map. We then demonstrate the performance of the Petz map by performing a $T_1$ experiment with and without QEC and show a significant improvement in the $T_1$ lifetime. Notably, beyond break-even performance has thus far been demonstrated through the GKP code or qudit codes \cite{girvin_break_even_2022} on superconducting processors. Here, we report an improvement of the $T_1$ times on both IBM Heron and Eagle quantum processors, using an entirely noise-adapted QEC scheme for the first time.

The rest of the paper is structured as follows. We first provide an algorithm for syndrome orthogonalization for an arbitrary AQEC code and noise process in Sec.~\ref {sec:algorithm}. We then proceeds to design a Petz recovery map involving syndrome measurements in Sec.~\ref {sec:syn_petz}. Sec.~\ref {sec:ex} shows that this \emph{syndrome-based Petz map} serves as the optimal recovery for the $[[4,1]]$-Leung code. Finally, we propose a circuit implementation for the syndrome-based Petz map and demonstrate its performance on IBM quantum hardware in Sec.~\ref {sec:implementation}.

\section{Preliminaries: Approximate Quantum Error Correction and the Petz recovery map}

We begin by reviewing the framework of perfect and approximate quantum error correction (QEC). Consider a quantum noise channel $\cA$ with Kraus operators $\{A_k, k = 1,2,\ldots, N \}$. An $[n,k]$ quantum code $\cC$ is a $k$-dimensional subspace of the larger $n$-dimensional Hilbert space into which the information is encoded~\cite{chuang_nielsen}. The Knill-Laflamme (KL) conditions for perfect QEC states that a quantum code with projector $P$ corrects for errors $\{A_k\}$ if and only if there exist complex scalars $\lambda_{kl}$ such that, $PA_{k}^{\dagger}A_{l}P = \lambda_{kl}P$~\cite{KL-cond}. These conditions guarantee that (a) different errors map the codespace to mutually no-overlapping subspaces, and, (b) each error acts on the codespace in a unitary fashion. These two properties in turn guarantee the existence of a syndrome-based recovery that corrects perfectly for the errors $\{A_{i}\}$ on every state in the codespace. 

The code $\cC$ is said to be an approximate QEC (AQEC) code if it satisfies the Knill-Laflamme (KL) conditions approximately, that is~\cite{Bo},
\begin{align}\label{eq:beny_oreskov}
    P A_l^{\dagger}A_k P &= \lambda_{kl}P + P B_{kl}P. 
\end{align} The presence of the perturbation term $B_{kl}$ indicates that different errors do not map the code space to mutually orthogonal subspaces. Rather, the different error subspaces overlap with each to a certain degree depending on $||B_{kl}||$, where $||(.)||$ denotes the operator norm of the perturbation terms. Therefore, it is not possible to perform a syndrome measurement to detect and identify the errors unambiguously. 

For example, the $[4,1]$ code which is known to correct approximately for amplitude-damping noise~\cite{leung} gets mapped to a set of non-overlapping syndrome subspaces under the action of amplitude-damping noise. Although this code is stabilised by the group generated by $\cS \equiv\langle  XXXX, IIZZ, ZZII\rangle$, measuring the stabilizer generators in $\cS$ is inadequate to identify the location of the damping errors~\cite{fletcher2008channel}. One needs to measure two more operators $\{ ZIII, IIIZ\}$, which do not stabilise the codespace but rather stabilise the syndrome subspaces. After identifying the location of the damping errors using secondary syndromes, a noise-adapted recovery is performed using the procedure outlined in~\cite{leung}, or using the optimal recovery obtained through a semidefinite program (SDP)~\cite{fletcher2008channel}. 

An alternative approach to AQEC is based on identifying a completely positive trace-preserving (CPTP) map that can approximately reverse the action of the noise. An analytical construction of such a recovery channel, based on the Petz map, has been shown achieve near-optimal performance~\cite{barnum2002,hkn_pm2010} for arbitrary codes and noise channels. 
The Petz map corresponding to a noise channel $\cA$ acting on a code $\cC$ with projector $P$ is described by the following set of Kraus operators.
    \begin{equation}\label{eq:petz_def}
        \cR_{P,A} \sim \{  PA_i^{\dagger}\cA(P)^{-1/2} \}.
    \end{equation}
The Petz map is known to be near-optimal both in terms of the worst-case fidelity~\cite{hkn_pm2010} as well as the entanglement fidelity~\cite{barnum2002,liang_jiang_near_opt}. The worst-case fidelity achieved by a code $\cC$ and recovery $\cR$ for noise channel $\cA$ is given by, 
    \begin{align}
        F^2_{\rm min}(\cR \circ \cA )&= \underset{|\psi\rangle \in \cC}{\rm min} \,F^2(|\psi\rangle,\cR_{P,\cA} \circ \cA (|\psi\rangle\langle\psi|)),
    \end{align}
    where $F^2(|\psi\rangle,\cR \circ \cA (|\psi\rangle\langle\psi|))= \langle \psi| \cR \circ \cA (|\psi\rangle\langle\psi|) |\psi \rangle $ is the fidelity between a pure state in $\cC$ and the recovered state $(\cR \circ \cA )(|\psi\rangle\langle\psi|)$. The entanglement fidelity~\cite{Schumacher} is defined as, 
    \begin{align}
        F_{\rm Ent}(\cR \circ \cA)&= \frac{1}{d^2} \sum\limits_{k,l}\tr (|R_kA_l|^2),
    \end{align}
where $d$ is the dimension of the code-space and $\{R_k\}$s are the Kraus operators for the recovery map. 
    
More recently, it was shown that the Petz map achieves the optimal entanglement fidelity under certain conditions \cite{liang_jiang_optimality}. To state the optimality condition, we first define the QEC matrix $M:= \{M_{[mk],[n\ell]}= \langle m_L|A_{k}^{\dagger}A_\ell|n_L\rangle\}$ corresponding to a channel $\cA$ and a $d$-dimensional quantum codespace defined as the span of the codewords $\{|m_{L}\rangle, m \in [1,d]\}$. If $M$ satisfies  
    \begin{align}\label{eq:opt_qec_cond}
        [M,\tr_L (\sqrt{M}\otimes I_d)]=0,
    \end{align}
where $\tr_L (.)$ is the partial trace taken over the codespace and $I_{d}$ is the $d \times d$ identity matrix, then the Petz map was shown to achieve the maximum entanglement fidelity, for the given channel and codespace. 

\section{Algorithm to generate orthogonal syndrome subspaces}

In this section, we present a solution to overcome the problem of overlapping error subspaces for arbitrary channels and codespaces. We develop an algorithmic procedure to obtain a noise channel {that is close to the original channel} and maps the codespace to a set of orthogonal subspaces.

Essentially, we seek to transform the set of noise operators $\{A_k\}$s to a new set of operators $\{E_k\}$s such that the new error operators satisfy a modified form of the Knill-Laflamme (KL) conditions for the code  $\cC$.
\begin{align}\label{eq:ortho_Ek}
   \langle m_{\rm L}|E_k^{\dagger}E_\ell|n_{\rm L}\rangle &= \alpha_{k}^{mn} \delta_{k\ell}
\end{align}

This condition implies that different errors map the codespace to orthogonal subspaces, but the non-unitary deformation remains. We refer to Eq.~\eqref{eq:ortho_Ek} as the \emph{diagonal form} of the AQEC condition. This transformation is crucial because we can detect the syndrome subspace without ambiguity while performing the recovery operation. We describe the algorithm for the transformation in Sec.~\ref{sec:algorithm} and the construction of the recovery operation in Sec.~\ref{sec:recovery}.  

\subsection{Towards diagonal AQEC conditions}\label{sec:algorithm}
To transform the perturbative QEC condition to the diagonal form of the AQEC condition, we identify the first element of the new set of the Kraus operators as 
\begin{widetext}

    \begin{figure}
        \centering
        \includegraphics[width=1.0\textwidth]{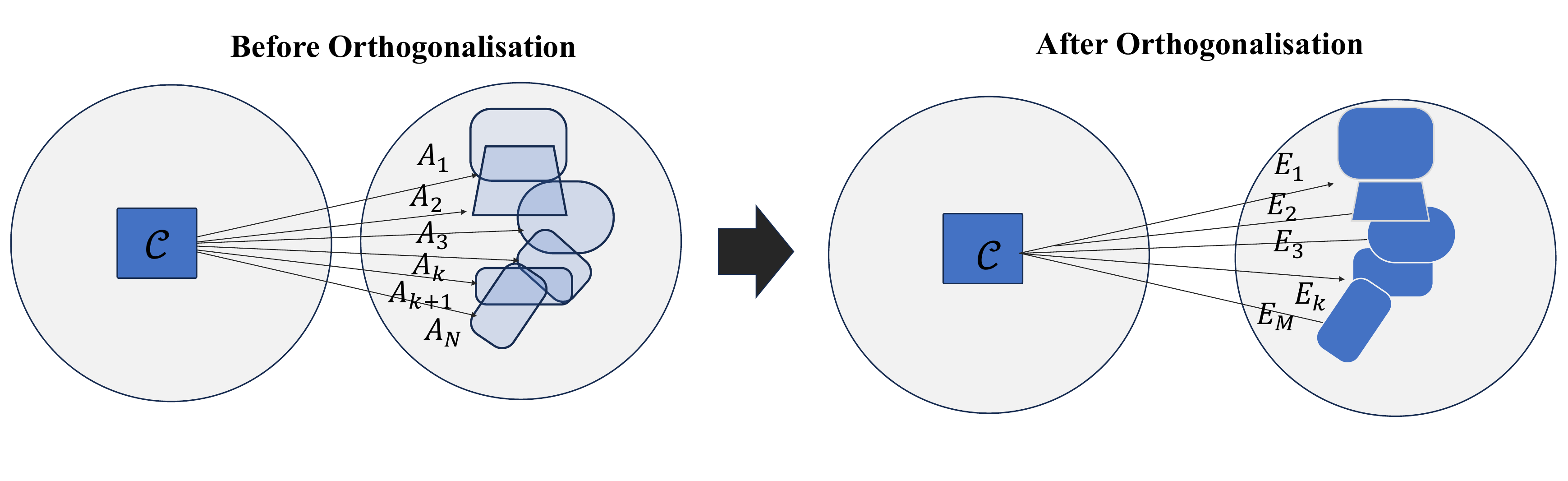}
        \caption{The figure shows the action of the noise channel on the codespace before and after orthogonalization. The ${A_k}$ denote the Kraus operators of the noise channel, while the ${E_k}$ represent the Kraus operators obtained through the orthogonalisation process. The relationship between ${E_k}$ and ${A_k}$ is given in Eq.~\eqref{eq:Ek_forms}.}
        \label{fig:ortho_sub}
    \end{figure}   
\end{widetext}
\begin{align}\label{eq:E1}
    E_1P = A_1P.
\end{align}
Now we construct the next operator $E_2$ as 
\begin{align}\label{eq:E2}
    E_2P & = A_2P - U_1P_1U^{\dagger}_1A_2 P,
\end{align}
where $U_1$ is unitary resulting from the polar decomposition of  $E_1P = U_1 \sqrt{PE_1^{\dagger}E_1P}$, and $P_1$ is the projector onto the non-null space of the operator $PE_1^{\dagger}E_1P$, i.e., the space spanned by the eigenvectors of $PE_1^{\dagger}E_1P$ with non-zero eigenvalues.

We construct the next Kraus operator $E_3P$ as follows 
\begin{align}
    E_3 P& = A_3 P- U_2P_2U^{\dagger}_2A_3P - U_1P_1U^{\dagger}_1A_3P, 
\end{align}
where $P_2$ is the projector onto the non-null space of the operator $PE_2^{\dagger}E_2P$ and $U_2$ is the unitary from the polar decomposition of $E_2P$. We carry forward this procedure to generate the rest of operators of the set $\{E_kP\}$ and therefore the $k^{\rm th}$ Kraus operator has the following form  
\begin{align}\label{eq:Ek_forms}
    E_kP &= A_kP - \sum\limits_{i=1}^{k-1} U_{i}P_iU^{\dagger}_iA_kP,
\end{align}
\begin{figure}
    \centering
    \includegraphics[width=1.0\columnwidth]{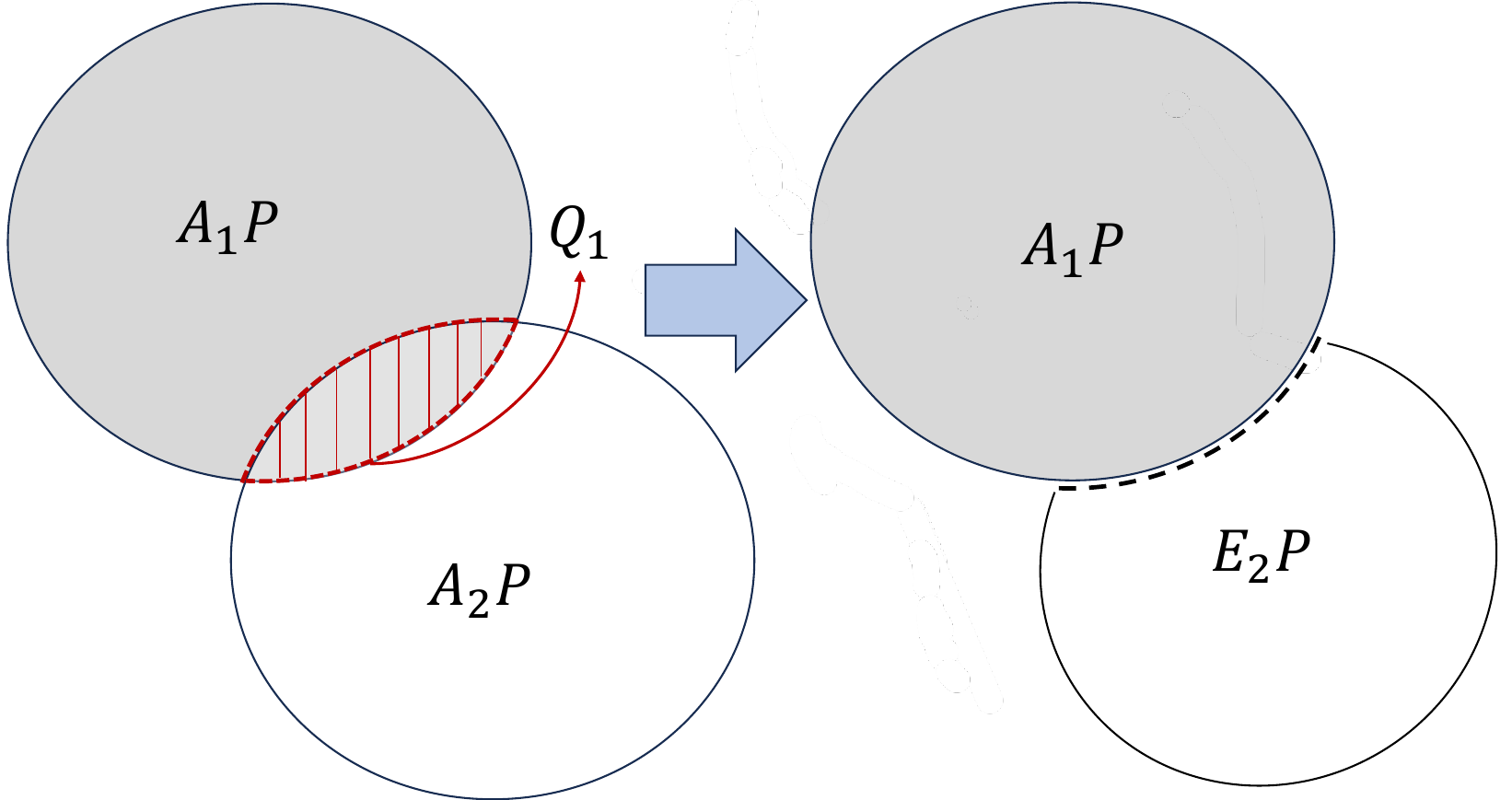}
    \caption{Schematics of the orthogonalisation algorithm, showing how the algorithm orthogonalises the overlapping subspaces. }
    \label{fig:ortho_schem}
\end{figure}
where the $P_i$s are the projector onto the non-null-space of the operators $PE_i^{\dagger}E_iP$, and $U_i$s are the unitary operators coming from the polar decomposition of the operators $E_kP = U_k \sqrt{PE_k^{\dagger}E_kP}$. A schematics of this procedure is shown in the Fig. \ref{fig:ortho_schem}.

It is worth noting that the new set of operators $\{E_k\}$ with $k >1$, is related to the primary set of Kraus operators $A_k$, with $k>1$, as 
\begin{align}\label{eq:E_k}
    E_k P&=Q_kA_kP,
\end{align}
where $Q_k= I-\sum\limits_{i}U_{i}P_iU^{\dagger}_i$, and $Q_k$s are projectors (see the Appendix \ref{app:Qk_s_proj}). Proceeding with the construction of the new set of Kraus operators we now show that the AQEC condition takes the diagonal form.  

\begin{theorem}\label{thm:PEkElP=0}
      The new set of Kraus operators in Eq.\eqref{eq:Ek_forms} satisfy the following orthogonality conditions.
     
    \begin{align}\label{eq: orthogonality}
    P_kE_k^{\dagger}E_\ell P_\ell= 0 \,\, \forall k\neq \ell
\end{align}
\end{theorem}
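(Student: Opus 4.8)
The plan is to leverage the Gram--Schmidt structure of Eq.~\eqref{eq:Ek_forms} together with the polar-decomposition identities that define $U_k$ and $P_k$. I first fix notation. Writing the polar decomposition as $E_kP = U_k S_k$ with $S_k := \sqrt{PE_k^{\dagger}E_kP}\ge 0$, the projector $P_k$ is by definition the projector onto the support of $S_k$, so that $S_kP_k = P_kS_k = S_k$. On that support $S_k$ is invertible, so $U_k$ carries $\mathrm{range}(P_k)$ onto $\mathrm{range}(E_kP)$; consequently $\Pi_k := U_kP_kU_k^{\dagger}$ is precisely the orthogonal projector onto the noisy subspace $\mathrm{range}(E_kP)$, and from $U_kP_kU_k^{\dagger}=\Pi_k$ one obtains the intertwining relation $U_kP_k = \Pi_kU_k$. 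This is the same $\Pi_k$ that enters $Q_k = I-\sum_{i<k}\Pi_i$ in Eq.~\eqref{eq:E_k}. I note that these identities hold whether $U_k$ is taken as the canonical partial isometry or as a full unitary extension from the polar decomposition, which is the only subtlety to watch in the bookkeeping.

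Second, I reduce the claim to a single orthogonality statement about the noisy subspaces. Since $P_kE_k^{\dagger}E_\ell P_\ell = (P_\ell E_\ell^{\dagger}E_kP_k)^{\dagger}$, it suffices to treat $k<\ell$. Interpreting $E_k^{\dagger}E_\ell$ as $(E_kP)^{\dagger}(E_\ell P)$ and substituting the polar forms, the support relations collapse the product to
\begin{align}
P_kE_k^{\dagger}E_\ell P_\ell = S_kU_k^{\dagger}U_\ell S_\ell = S_kU_k^{\dagger}\,\Pi_k\Pi_\ell\,U_\ell S_\ell ,
\end{align}
where in the last step I used $S_kU_k^{\dagger}=S_kU_k^{\dagger}\Pi_k$ and $U_\ell S_\ell = \Pi_\ell U_\ell S_\ell$, both of which follow from the intertwining relation $U_kP_k=\Pi_kU_k$ together with the support relations $S_kP_k=S_k$ and $P_\ell S_\ell=S_\ell$. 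Thus the theorem reduces to showing $\Pi_k\Pi_\ell = 0$ for $k\ne\ell$, i.e.\ that the noisy subspaces $\mathrm{range}(E_kP)$ and $\mathrm{range}(E_\ell P)$ are mutually orthogonal.

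Finally, I establish this orthogonality directly from the construction. By Eq.~\eqref{eq:E_k} we have $E_\ell P = Q_\ell A_\ell P$ with $Q_\ell = I-\sum_{i<\ell}\Pi_i$, and by Appendix~\ref{app:Qk_s_proj} each $Q_\ell$ is an orthogonal projector; this is equivalent to the $\{\Pi_i\}_{i<\ell}$ being mutually orthogonal with $\sum_{i<\ell}\Pi_i\le I$. For $k<\ell$ this gives $Q_\ell\Pi_k=0$, hence $\mathrm{range}(E_\ell P)\subseteq\mathrm{range}(Q_\ell)\perp\mathrm{range}(\Pi_k)=\mathrm{range}(E_kP)$, so that $\Pi_k\Pi_\ell=0$, which closes the argument. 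I expect the main obstacle to be precisely this last step: one must verify, by induction on $k$, that the range projectors are genuinely mutually orthogonal, which rests on $Q_k$ being an honest projector (the content of Appendix~\ref{app:Qk_s_proj}) and on $\Pi_k$ really being the \emph{full} range projector of $E_kP$ rather than a proper sub-projector. The algebra in the first two paragraphs is then routine.
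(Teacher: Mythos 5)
Your reduction in the first two paragraphs is correct and clean: writing $E_kP=U_kS_k$, taking $P_k$ to be the support projector of $S_k$, and $\Pi_k=U_kP_kU_k^{\dagger}$ the range projector of $E_kP$, one indeed gets $P_kE_k^{\dagger}E_\ell P_\ell=S_kU_k^{\dagger}\Pi_k\Pi_\ell U_\ell S_\ell$, so the theorem is equivalent to the mutual orthogonality $\Pi_k\Pi_\ell=0$. The gap is in your final step. You justify ``$Q_\ell$ is an orthogonal projector'' by citing Appendix~\ref{app:Qk_s_proj}, but that appendix cannot be invoked here: its Lemma~\ref{lem:UkU_l=0} \emph{assumes} the orthogonality $PE_k^{\dagger}E_\ell P=0$ --- the very conclusion of this theorem --- and derives $PU_k^{\dagger}U_\ell P=0$ (hence the mutual orthogonality of the $\Pi_i$, hence the projector property of $Q_\ell$) from it. So as written your argument is circular: ``$Q_\ell$ is a projector'' is equivalent to mutual orthogonality of $\{\Pi_i\}_{i<\ell}$, which is precisely the theorem restricted to indices below $\ell$. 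You flag this as the main obstacle but then defer it back to the appendix, which is exactly the one step that cannot be deferred.

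The repair is to run the induction you allude to explicitly, and it is short. Base case: $\Pi_1$ is a projector, so $Q_2=I-\Pi_1$ is one and $Q_2\Pi_1=0$. Inductive step: assume $\Pi_i\Pi_j=0$ for all $i\neq j<\ell$; then $Q_\ell=I-\sum_{i<\ell}\Pi_i$ is an orthogonal projector satisfying $Q_\ell\Pi_k=0$ for every $k<\ell$, so $\mathrm{range}(E_\ell P)=\mathrm{range}(Q_\ell A_\ell P)\subseteq\mathrm{range}(Q_\ell)$ is orthogonal to $\mathrm{range}(\Pi_k)$, i.e.\ $\Pi_\ell\Pi_k=\Pi_\ell Q_\ell\Pi_k=0$, closing the induction with no appeal to the appendix. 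With that repair your route is sound, and it is arguably more transparent than the paper's own proof, which establishes $P_1E_1^{\dagger}E_2P_2=0$ by direct cancellation using pseudo-inverse identities ($M_{11}^{1/2}P_1M_{11}^{-1/2}=P_1$, etc.) and then asserts ``similar steps'' for all other pairs; for pairs with $\ell\geq 3$ those similar steps silently require the previously established orthogonalities to kill the cross terms $P_kE_k^{\dagger}\Pi_iA_\ell P_\ell$ with $i\neq k$, so the paper's computation conceals the same induction that your version must state explicitly.
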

\begin{proof}
    We begin the proof by considering the first pairs of operators \{$E_1P_1$, $E_2P_2$\} from the newly constructed set of Kraus operators. The Eq.\eqref{eq:E2} and Eq.\eqref{eq:E1} gives the following 
    \begin{align} \label{eq:E1E2_line_1} 
        P_1E_1^{\dagger}E_2P_2 &= P_1A_1^{\dagger}A_2P_2 - P_1A_1^{\dagger}U_1P_1U_1^{\dagger}A_2P_2\\
    \label{eq:E1E2_line_2}    & =  P_1A_1^{\dagger}A_2P_2-  M_{11}^{1/2}U_1^{\dagger}U_1 P_1U_1^{\dagger}A_2P_2\\
   \label{eq:E1E2_line_3}       & =  P_1A_1^{\dagger}A_2P_2 - M_{11}^{1/2}P_1 M_{11}^{-1/2}P_1
   A_1^{\dagger}A_2P_2\\
       \label{eq:E1E2_line_4}   &=0 .
    \end{align}
    For the equality in the Eq.\eqref{eq:E1E2_line_2}, we consider the polar decomposition of $A_1P = E_1P = U_1\sqrt{PA_1^{\dagger}A_1P}$, the $M_{11}$ operator in Eq.\eqref{eq:E1E2_line_2}  is $M_{11}= PA_1^{\dagger}A_1P$  and also note that $E_1P_1 = U_1\sqrt{P_1A_1^{\dagger}A_1P_1}$. For the equality in the Eq.\eqref{eq:E1E2_line_3} we consider spectral decomposition of the operator $M_{11}$ first.
    \begin{align}\label{eq:spectral_decom_M_11}
        M_{11}= \sum\limits_{i=0}^{N-1} c_i |c_i\rangle\langle c_i|,
    \end{align}
    $\{c_i\}$ is the eigenvalues of $M_{11}$ and we keep the possibility that some $c_i$ may be zero and $N$ is the dimension of the operators $M_{11}$. Consider there are $N'$ numbers of non-zero $c_i$s in $\{c_i\}$. Therefore the $P_1$ is the following 
    \begin{align}
        P_1 &= \sum\limits_{i=0}^{N'-1} |c_i\rangle\langle c_i |.
    \end{align}
     Therefore the pseudo-inverse $M^{-1/2}_{11}$ is the following 
     \begin{align}
          M_{11}^{-1/2}= \sum\limits_{i=0}^{N'-1} d, _i^{-1/2} |c_i\rangle\langle c_i|
     \end{align}
     Therefore $M_{11}^{1/2}P_1M_{11}^{-1/2}$ is 
     \begin{align}
        M_{11}^{1/2}P_1M_{11}^{-1/2} & = \sum\limits_{i=0}^{N'-1} |c_i\rangle\langle c_i|=P_1. 
     \end{align}
     We also note that the $P_1E_1^{\dagger}E_2P_2$ implies that 
     \begin{align}
         PE_1^{\dagger}E_2P= 0.
     \end{align}

     Using the similar steps we can proceeds to prove the orthogonality for the pairs $\{E_2P_2,E_3P_3\}$ and $\{E_1P_1,E_3P_3\}$. Similarly the proofs of the orthogonality can be shown for any pairs of operators $  \{E_iP_i,E_{j}P_j \}$ and thus we can conclude $P_iE_i ^{\dagger}E_jP_j=0$ for all $i \neq j $.
 \end{proof}
As a consequence of the orthogonality of the operators $\{E_k\}$ we note that  that $P_iU_{i}^{\dagger}U_jP_j=0$ (See the Lemma \ref{lem:UkU_l=0}). $PE_k^{\dagger}E_{\ell}P=0$  signifies that the operators $\{E_kP\}$ maps the code space to different subspaces and this orthogonality is necessary to define the recovery operation in the following section. We outline a syndrome detection procedure in the Appendix \ref{app:uni_syn}. We will witness that the orthogonalisation algorithm not only paves a way to define a better recovery operation but also helps optimizing the circuit implementation of the recovery. The Fig.\ref{fig:ortho_sub} shows a schematic diagram of the syndrome subspaces orthogonalisation procedure. Note also, in general the number ($N_A$) of the actual noise operators $A_kP$ and the number ($N_E$) of the new error operators $E_kP$ are different. As a consequence of the orthogonalisation we also note that the QEC matrix $\tilde{M}_{kk}$ is less than the original QEC matrix $M_{kk}$ through the Lemma \ref{lem:M>Mt}.



\section{Syndrome-based Recovery using the Petz map }\label{sec:syn_petz}

At this juncture, we further extend the construction of the recovery operation with the help of newly constructed Kraus operators $\{E_k\}$s. We \emph{define} the Petz recovery, which is adapted to the channel defined by the Kraus operators $\{E_k\}$s. Therefore the for a code $\cC$ with projector $P$, the Petz recovery adapted to the channel $\cE \sim \{E_i\}$ looks as follows 
\begin{align}\label{eq:petz_w}
    \cR_{P,E}(.)&\equiv \sum\limits_{i=1}^{N}\, PE_i^{\dagger}\cE(P)^{-1/2}(.)\cE(P)^{-1/2}E_i P.
\end{align}
Note that here we define the operator $\cE(P)^{-1/2}$ on the support of $\cE(P)$. In the framework of QEC through Petz recovery, we generally adapt the Petz map to the noise process. We also note that from the Lemma \ref{lem:Ep<Ap} the $\cE(P)^{-1/2} \succeq \cA(P)^{-1/2}$.   However, we adapt the recovery to a different map $\cE$, which may not be TP, but as proved in the former section, it is trace non-increasing. We also note one crucial point: by the definition of Petz recovery, the map in the Eq.\eqref{eq:petz_w} is trace-preserving. Following the Ref. \cite{liang_jiang_near_opt}, we can write the analytical form of the Kraus operators of the Petz map in Eq.\eqref{eq:petz_w} as 
\begin{align}\label{eq:petz_kraus}
    R^{P,E}_{\ell} \equiv \sum\limits_{\mu,\nu} \Tilde{M}_{[\mu,\ell],[\nu,\ell]}^{-1/2} \,\,|\mu\rangle_{L}\langle \nu| E_\ell^{\dagger},
\end{align}
where $ \Tilde{M}_{[\mu,\ell],[\nu,k]} = \langle \mu |E_\ell^{\dagger}E_k|\nu\rangle$ are the QEC matrix elements concerning the newly formed Kraus operators in the Sec.\ref{sec:algorithm}. The detail calculation can be found in the Appendix \ref{app:kraus_sbpm}. Here also we note that the summation over the indices $\{\m,\n\}$ do not runs from $0- (d-1)$ as the support of $\tilde{M}_{\ell,k}$ is defined through only $N'$ numbers of code vectors.

\subsubsection{Fidelity bounds for the syndrome-based Petz recovery}\label{sec:opt_petz}

In general, a Petz map adapted to the noise process is near-optimal in terms of the measure of both the worst-case fidelity \cite{hkn_pm2010} and entanglement fidelity \cite{liang_jiang_near_opt}. It may also serve as \emph{optimal recovery} with regards to the measure of the entanglement fidelity if the Kraus operators of the noise process ${A_k}$s and the code $\cC$ satisfy the following commutation relation \cite{liang_jiang_optimality}
\begin{align}\label{eq:commutation_M}
     [M,\tr_{\rm L}(\sqrt{M}\otimes I_d)]=0,
\end{align}
where $M$ is called the QEC matrix and has the following matrix elements 
\begin{align}
    M := \{M_{[\mu,k],[\nu,\ell]} = \langle \mu|A_k^{\dagger}A_{\ell}|\nu\rangle\},
\end{align}
and $d$ is the dimension of the code $\cC$.
An arbitrary noise process and code generally do not satisfy the commutation in Eq.\eqref{eq:commutation_M}. 

In the previous section, we demonstrated a framework for an AQEC protocol by mapping back the orthogonal and non-overlapping subspaces, which we generated from the overlapping syndrome subspaces to the code space. The recovery in the previous section has been done through a unitary that appears from the polar decomposition of the operators $\{E_k\}$. However, the recovery is not optimal, as we see from the Eq.~\eqref{eq:fid_min_bound} that the lower bound on the fidelity is close to the bound of $F_{\rm min}^2$ for the Leung recovery. However, we can deploy the Petz map defined in the Eq.\eqref{eq:petz_w} to revert these orthogonal subspaces to the code space.

Now, by construction described in the Sec.\ref{sec:algorithm}, the operator $PE_k^{\dagger}E_lP=0$ for $k\neq l$. Therefore, the QEC matrix $\Tilde{M} = PE_k^{\dagger}E_lP$ is diagonal in the logical state basis.
Furthermore, the matrix $\sqrt{\Tilde{M}}$ has an block diagonal from in the logical basis. Hence, the partial trace with respect to the logical basis $\tr_L(\sqrt{\Tilde{M}}\otimes I_d)$ is diagonal. Therefore, the QEC matrix $\Tilde{M}$ satisfies the optimality condition given by the Eq. \eqref{eq:opt_qec_cond}, but for the noise channel $\cE$. The following theorem gives a bound on the entanglement fidelity of the syndrome-based Petz map in Eq.\eqref{eq:petz_w}, for the original noise channel $\cA$. 

 \begin{theorem}
      For any quantum code with projector $P$ onto the code space and a CPTP noise process $\cA$ the entanglement fidelity under the Petz map $\cR_{P,\cA}$ is bounded as follows
    \begin{align}\label{ineq:petz_syn_petz}
         F_{\rm ent}(\cR_{P,\cA}\circ \cA)& \geq F_{\rm ent}(\cR_{P,\cE}\circ \cA)^2.
    \end{align}
    Hence the fidelity losses $\eta_P = 1-  F_{\rm ent } (\cR_{P,\cA}\circ \cA)$ and $\eta_s = 1-  F_{\rm ent } (\cR_{P,\cE}\circ \cA)$ satisfy the following bound 
    \begin{align}
        \eta_P \leq 2 \eta_s 
    \end{align}
 \end{theorem}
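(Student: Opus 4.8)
The plan is to introduce the optimal entanglement fidelity
$F^{\rm opt}_{\rm ent}(\cA) := \max_{\cR}\,F_{\rm ent}(\cR\circ\cA)$, the maximum taken over all CPTP recovery maps $\cR$, as an intermediary that sits between the two sides of \eqref{ineq:petz_syn_petz}. The key analytic input is the Barnum--Knill near-optimality theorem~\cite{barnum2002}: the Petz (transpose) recovery adapted to the \emph{true} channel $\cA$ obeys $F_{\rm ent}(\cR_{P,\cA}\circ\cA)\ge (F^{\rm opt}_{\rm ent}(\cA))^2$. Since $\cR_{P,\cA}$ of \eqref{eq:petz_def} is precisely the transpose channel analysed there, this bound applies directly, and it is exactly the squared form that appears in the statement to be proved.

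Next I would note that the syndrome-based Petz map $\cR_{P,\cE}$ of \eqref{eq:petz_w} is itself an admissible recovery (the preceding discussion records that it is trace-preserving by construction). Hence $\cR_{P,\cE}\circ\cA$ is one particular recovery-composed-with-noise, so by the very definition of the optimum, $F_{\rm ent}(\cR_{P,\cE}\circ\cA)\le F^{\rm opt}_{\rm ent}(\cA)$. Chaining the two facts and using that $t\mapsto t^2$ is monotone on $[0,1]$ (all fidelities lie in $[0,1]$) yields
\begin{align}
F_{\rm ent}(\cR_{P,\cA}\circ\cA)\ \geq\ (F^{\rm opt}_{\rm ent}(\cA))^2\ \geq\ (F_{\rm ent}(\cR_{P,\cE}\circ\cA))^2,
\end{align}
which is inequality \eqref{ineq:petz_syn_petz}. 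For the loss bound I would substitute $F_{\rm ent}(\cR_{P,\cA}\circ\cA)=1-\eta_P$ and $F_{\rm ent}(\cR_{P,\cE}\circ\cA)=1-\eta_s$ and expand: $1-\eta_P\ge (1-\eta_s)^2 = 1-2\eta_s+\eta_s^2\ge 1-2\eta_s$, where the last step discards the nonnegative term $\eta_s^2$; rearranging gives $\eta_P\le 2\eta_s$.

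The main obstacle is not the algebra but justifying that every hypothesis is actually met. Concretely, I must confirm that $\cR_{P,\cE}$ qualifies as a legitimate CPTP recovery dominated by $F^{\rm opt}_{\rm ent}(\cA)$, despite the fact that $\cE(P)^{-1/2}$ is defined only on $\mathrm{supp}\,\cE(P)$, which is contained in (but possibly strictly smaller than) $\mathrm{supp}\,\cA(P)$ by Lemma~\ref{lem:Ep<Ap}. I would handle this by observing that even if $\cR_{P,\cE}$ is only trace non-increasing on the full output space of $\cA$, it can be completed to a CPTP map by appending Kraus operators supported on the orthogonal complement; because the entanglement fidelity is a sum of nonnegative terms $\tfrac{1}{d^2}\sum_{k,\ell}|\tr(R_k A_\ell P)|^2$, such a completion can only \emph{increase} $F_{\rm ent}$, so $F_{\rm ent}(\cR_{P,\cE}\circ\cA)\le F^{\rm opt}_{\rm ent}(\cA)$ holds regardless. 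The only remaining care is to ensure the Barnum--Knill bound is invoked in the correct direction and with the squared (rather than affine $2F-1$) form, so that the resulting chain produces precisely \eqref{ineq:petz_syn_petz}.
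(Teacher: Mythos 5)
Your proof is correct and follows essentially the same route as the paper's: invoke the Barnum--Knill bound $F_{\rm ent}(\cR_{P,\cA}\circ\cA)\geq F_{\rm ent}(\cR_{\rm opt}\circ\cA)^2$, dominate $F_{\rm ent}(\cR_{P,\cE}\circ\cA)$ by the optimal recovery's fidelity, chain the inequalities, and expand $(1-\eta_s)^2$ to get $\eta_P\leq 2\eta_s$. Your additional care in verifying that $\cR_{P,\cE}$ is a legitimate recovery (completing a trace non-increasing map to CPTP without decreasing $F_{\rm ent}$) is a point the paper simply asserts, so your write-up is if anything slightly more rigorous than the original.
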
  
 \begin{proof}
     From \cite{barnum2002} we have 
     \begin{align}
         F_{\rm ent}(\cR_{\rm opt}\circ\cA)^2& \leq F_{\rm ent}(\cR_{P,\cA}\circ\cA).
     \end{align}
     Here $\cR_{\rm opt}$ is the optimal recovery map. Therefore any recovery including the $\cR_{P,\cE}$ the following inequality holds 
     \begin{align}
         F_{\rm ent } (\cR_{P,\cE}\circ \cA) \leq   F_{\rm ent } (\cR_{\rm opt}\circ \cA)
     \end{align}
     Therefore combining these two inequality we have the following 
     \begin{align}
        F_{\rm ent } (\cR_{P,\cE}\circ \cA)^2 \leq  F_{\rm ent } (\cR_{\rm opt}\circ \cA)^2\leq  F_{\rm ent } (\cR_{P,\cA}\circ \cA).
     \end{align}
     Therefore we have 
     \begin{align}
         (1- \eta_s)^2 &\leq 1- \eta_P\\
         2\eta_s &\geq \eta_P.
     \end{align}
     This completes the proof. 
 \end{proof}
  We now proceed to demonstrate the performance of the syndrome-based Petz map for approximate QEC codes.

\section{Examples: Syndrome-based AQEC protocols for amplitude-damping noise}\label{sec:ex}

To demonstrate the performance of our recovery operations, we consider the noise model to the amplitude damping noise. It is a dominating noise process in various hardware like superconducting processors \cite{chirolli2008decoherence}. The amplitude-damping (AD) noise channel has the operator-sum representation with the following Kraus operators \cite{chuang_nielsen}
\begin{align}
    D_0 &= \begin{pmatrix}
        1 & 0 \\ 0 & \sqrt{1-\gamma}
    \end{pmatrix} &  D_1 = \begin{pmatrix}
        0 & \sqrt{\gamma} \\ 0 & 0
    \end{pmatrix},
\end{align}
where the parameter $\gamma$ is the probability for the decay of the qubit from the excited state $|1\rangle$ to the ground state $|0\rangle$. Here, we refer to the parameter $\gamma$ as the damping strength or the strength of the AD noise.

 We consider two $[[4,1]]$ codes to correct the AD noise. One is the Leung code \cite{leung}, and the other is the optimal [[4,1]] code \cite{ak_good_code}. Note that the code in Ref.\cite{ak_good_code} is optimal when the recovery is fixed to the Petz map $\cR_{P,\cA}$.   
\begin{table}[t!]
    \centering
    \begin{tabular}{c c}
    \hline
    \hline
    Operators & Analytical form \\
    \hline
    \\
      $R_{0}$   & $|0_L\rangle (\alpha\langle 0000|+\beta \langle 1111|) +\frac{1}{\sqrt{2}}|1_L\rangle(\langle 0011|+ \langle 1100|) $  \\ \\

           $R_{1}$   & $|0_L\rangle|\langle 1110| +|1_L\rangle|\langle 0010| $  \\ \\
            $R_{2}$   & $|0_L\rangle|\langle 1101|+|1_L\rangle|\langle 0001| $  \\ \\
             $R_{3}$   & $|0_L\rangle|\langle 1011|+|1_L\rangle|\langle 1000| $  \\ \\ 
              $R_{4}$   & $|0_L\rangle|\langle 0111| +|1_L\rangle|\langle 0100| $  \\ \\
            $R_{5}$   & $|0_L\rangle|\langle 0110|$  \\ \\
            $R_{6}$   & $|0_L\rangle|\langle 1001|$  \\ \\
             $R_{7}$   & $|0_L\rangle|\langle 1010|$  \\ \\
              $R_{8}$   & $|0_L\rangle|\langle 1010|$  \\ \\
              $R_{9}$   & $|1_L\rangle(\beta\langle 0000|-\alpha \langle 1111|) +\frac{1}{\sqrt{2}}|0_L\rangle(\langle 0011|- \langle 1100|) $ \\ \\
              \hline
              \hline
    \end{tabular}
     \caption{Analytical form of the recovery operators $\cR_{P,\cE}$ for the [[4,1]] code. We see that these recovery operators are similar to the SDP-based recovery \cite{fletcher2008channel} operators with only difference in $R_9$. In the Fletcher recovery $|1_L\rangle$ is associated with the last term of the $R_9$ in this table. }
    \label{tab:rec_tab}
\end{table}

\subsection{The Leung [[4,1]] code}
To protect a single qubit of information against the amplitude-damping noise, we consider the following four qubit codes proposed by Leung \emph{et al.}\cite{leung}  
\begin{align}\label{eq:leung}
    |0_L\rangle &= \frac{1}{\sqrt{2}}|0000\rangle+|1111\rangle\\
    |1_L\rangle &= \frac{1}{\sqrt{2}}|0011\rangle+|1100\rangle.
\end{align}
\begin{figure}[t!]
    \centering
    \includegraphics[width=1\columnwidth]{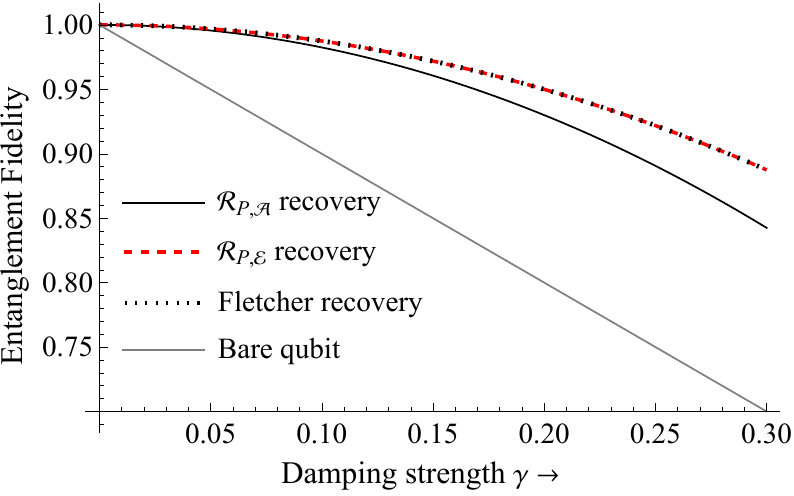}
    \caption{Comparison of the performance of various recovery protocols based on entanglement fidelity for the [[4,1]]-Leung code.}
    \label{fig:ent_fids}
\end{figure}

Note that all the single-qubit damping error operators $D_{0001},D_{0010},D_{0100},D_{1000}$ acting on the code space, map the $\cC$ to orthogonal subspaces~\cite{leung}, where $D_{ijkl}= D_i\otimes D_j \otimes D_k \otimes D_l$. Therefore, our operator orthogonalisation algorithm would remain ineffective for this code. The details of the construction of the new kraus operators for the code in Eq.\eqref{eq:leung} are in the Appendix \ref{app:new_kraus_cons}.  Hence, performing the recovery proposed in the Sec.\ref{sec:recovery} will result in a similar fidelity to the Leung recovery. However, there is a slight difference between them. The difference is that the recovery obtained from the orthogonalisation can also correct the two-qubit errors to some extent, but the Leung recovery cannot do that. Because of this reason, we have different curves in the plots in Fig.~\ref{fig:ent_fids} for the Leung recovery and the $R_E$-recovery. 


When we deal with the syndrome-based Petz recovery, we encounter a normaliser map $\cE(P)^{-1/2}(.)\cE(P)^{-1/2}$. We also have this in the standard Petz recovery. This operation helps reduce the code space deformation due to the perturbative satisfaction of the KL condition. Therefore, even if the syndrome spaces are orthogonal, we can recover better using a normal Petz map than Leung recovery. Now, in our work, we consider a syndrome-based version of the Petz map. This orthogonal Petz map reverts the modified subspace obtained from the orthogonalisation to the codespace. It not only helps reduce the deformation in the codespace but also corrects the two-qubit errors to some extent. We argued in Sec.~\ref{sec:opt_petz} that the orthogonal Petz map is an optimal recovery map. We should note that the Kraus operators of the orthogonal Petz map are different from the Kraus operators of the optimal recovery obtained from the semi-definite programming. The Table \ref{tab:rec_tab} shows the Kraus operator of the recovery $\cR_{P, E}$.

We compare the worst-case fidelity achieved through the recovery protocol $\cR_E$ with two other noise-adapted recovery operations; one is the Leung recovery \cite{leung}, and the other is the Petz recovery (Eq.~\eqref{eq:petz_def}), which serves as the near-optimal recovery in the Table \ref{tab:wcf_table}. We note that the recovery performance of $\cR_E$ is better than the Leung recovery, but the Petz recovery outperforms $\cR_E$. To this end, we also compare the performance of the Petz map $\cR_{P, E}$ with the other recovery protocols in terms of entanglement fidelity in Figure \ref{fig:ent_fids}, for the $[[4,1]]$ code, where we see that the entanglement fidelity under the recovery map $\cR_{P, E}$ is

\begin{align}
    F_{\rm Ent}(\cR_{P,\cE}\circ \cA) &= 1- 1.25 \gamma^2 +\cO(\gamma^3),
\end{align}

which it greater that the $ F_{\rm Ent}(\cR_{P,\cA}\circ \cA)$ for the four-qubit Leung code. Moreover, we note that  $F_{\rm Ent}(\cR_{P,\cE}\circ \cA)$ with [[4,1]]-Leung code is similar to the Fletcher optimal recovery \cite{fletcher2008channel}. 

\begin{widetext}

\begin{table}[t!]
    \centering
    
    \begin{tabular}{|c|c|c|c|c|}
     \hline 
     & \multicolumn{2}{c}{Four-qubit Leung code \cite{leung}} & \multicolumn{2}{|c|}{$[[4,1]]$-structured code }\\
     \hline
       Recovery   & Worst case Fidelity & Entanglement fidelity  & Entanglement Fidelity & Worst case fidelity  \\
           \hline
      Leung   & $1-2.75\gamma^2$ & $1-3\gamma^2$  &  $\times$& $\times$ \\
      \hline
      Petz $\cR_{P,\cA}$   & $1-1.75\gamma^2$ & $1-1.75\gamma^2$ & $1- 1.86 \gamma^2$& $1- 1.33 \gamma^2$\\
       \hline
      $\cR_E$   & $1-2.47\gamma^2$ & $1-1.996\gamma^2$ &$1-2.04\gamma^2$ &$1-2.1\gamma^2$\\
       \hline
      $\cR_{P,\cE}$  & $1-1.15\gamma^2$ &$1- 1.25 \gamma^2$ & $1- 1.53\gamma^2$ &$1-1.134\gamma^2$\\
       \hline
    \end{tabular}
    \caption{ Fidelity expressions for different recovery operations upto $\cO(\gamma^2)$. We consider the two types of encoding. One is the [[4,1]]-Leung code, and the other is also a [[4,1]] code, which serves as optimal code if the recovery is the Petz map \cite{ak_good_code}. The ``$\times$" indicates that Leung recovery is not applicable for the [[4,1]]-structured code. All these expression has been obtained by fitting the numerically obtained data with the polynomial $1- a_1 \gamma - a_2 \gamma^2 - a_3 \gamma^3 - a_4 \gamma^4 - a_5\gamma^5$. After the fitting we obtain the numerical value of $a_1$ is vanishingly small and of the order of $\sim 10^{-6}$ for all the recovery and all the codes.  }
    \label{tab:wcf_table}
\end{table}
\end{widetext}


\subsection{Other four-qubit codes}
To see the performance of our recovery protocols outlined in the Secs.\ref{sec:recovery}- \ref{sec:syn_petz} for the other four qubit codes, we consider the four qubits optimal code \cite{ak_good_code}, obtained from a structured as well as unstructured search in the 4-qubit Hilbert space by fixing the recovery operation to the Petz map. We also consider another four-qubit code, which performs optimally under a optimal recovery (obtained through a numerical search) in the Ref. \cite{liang_jiang_opt_code}.
\subsubsection{[[4,1]]-structured numerical code}
From the paper \cite{ak_good_code} we choose the code from the structured search and we explore the performance of different recovery protocols. We note that the requirement for the Leung recovery is that the syndrome subspaces should be orthogonal  (See Eq. (29) in \cite{leung}). This orthogonality is also a requirement for the Cafaro recovery \cite{cafaro_PhysRevA.89.022316}. However, none of the four qubit codes in \cite{ak_good_code} satisfy this orthogonality criterion. Therefore, the codes in \cite{ak_good_code} do not admit the Leung or Cafaro recovery. 

However, our work solves this problem by orthogonalising the syndrome subspaces. To demonstrate the performances of the orthogonalisation procedure, we consider the code obtained from the "structured search" (see Sec. III of \cite{ak_good_code}). We first proceed with the recovery map $\cR_E$. The Entanglement fidelity and the worst-case fidelity are the following. 

\begin{align}
F_{\rm Ent} (\cR_{E}\circ\cA)  &= 1- 2.04 \gamma^2 +\cO(\gamma^3) \\
F^2_{\rm min}(\cR_{E}\circ\cA)  &= 1- 2.1 \gamma^2 +\cO(\gamma^3)
\end{align}

For the syndrome-based Petz map $\cR_{P, E}$, the entanglement and the worst-case fidelities are the following.

\begin{align}
    F_{\rm Ent}(\cR_{P,\cE}\circ\cA) &= 1- 1.53\gamma^2 +\cO(\gamma^3)\\
    F^2_{\rm min}(\cR_{P,\cE}\circ\cA)  &= 1- 0.52\gamma^2 +\cO(\gamma^3)
\end{align}

We, therefore, note that from Table \ref{tab:wcf_table}, the performance of the orthogonal Petz map is much better than the normal Petz map for this code. 

\subsubsection{Optimal [[4,1]]-code from biconvex search}
Here we consider the following [[4,1]] AQEC code from the article \cite{liang_jiang_opt_code}, which is tailor made to correct the amplitude-damping noise 
\begin{align}\label{eq:liang_code}
    |0_L\rangle& = \sqrt{1- \frac{1}{2(1-\gamma^2)}}|0000\rangle +\frac{1}{\sqrt{2}(1-\gamma)}|1111\rangle\\
    |1_L\rangle&= \frac{1}{2}(|0011\rangle+|1100\rangle +|0101\rangle-|1010\rangle).
\end{align}
Note that the encoding depends on the value of the noise strength or the damping strength $\gamma$. This code results from a biconvex search that aims to achieve an optimal performance against the amplitude-damping noise. To do so, the author of \cite{liang_jiang_opt_code} has also figured out the recovery through the biconvex optimisation. Not only that, but there exists another recovery that is engineered analytically. The numerically optimized recovery results in a entanglement fidelity around $F_{\rm ent} = 1- 1.09\gamma^2 +\cO(\gamma^3)$, which the analytically obtained recovery in \cite{liang_jiang_near_opt} gives $F_{\rm ent} = 1- 1.85\gamma^2 +\cO(\gamma^3)$. 
\begin{figure}[t!]
    \centering
    \includegraphics[width=1\columnwidth]{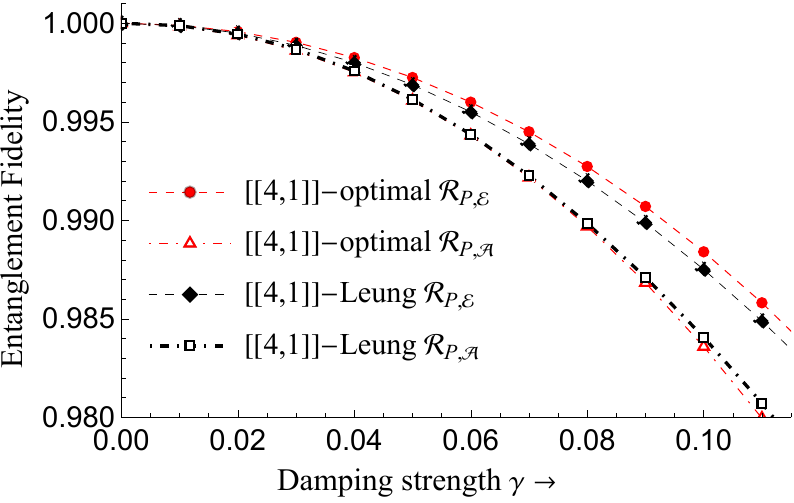}
    \caption{Performances of different codes under the Petz and the syndrome-based Petz map. Here the [[4,1]]-optimal $\cR_{P,\cE}$ refers to the syndrome-based Petz recovery with the optimal [[4,1]]-code in Eq.\eqref{eq:liang_code}. The $\cR_{P,\cA}$ is the ordinary Petz map, and the $R_{P,\cE}$ is the syndrome-based Petz map. We also note the under the Petz recovery $\cR_{P,\cA}$ the entanglement fidelity is $F_{\rm ent} = 1- 1.5 \gamma^2$, which is better than the Leung code with $\cR_{P,\cA}$ (see the Table \ref{tab:wcf_table}). }
    \label{fig:opt_code_syn_petz}
\end{figure}
In our paper, we consider the code in Eq.\eqref{eq:liang_code}, then construct the new set of Kraus operators through our subspaces orthogonalisation procedure, and then apply the syndrome-based Petz recovery $R_{P,\cE}$. The detailed construction of the new Kraus operators for the code in Eq.\eqref{eq:liang_code} is outlined in Appendix \ref{app:new_kraus_cons}. We show the performance of the syndrome-based recovery for the code in Eq.\eqref{eq:liang_code} in Fig. \ref{fig:opt_code_syn_petz}. It shows that the optimal code and the syndrome-based Petz outperform the syndrome-based Petz with the [[4,1]] Leung code. We have discussed the syndrome-based Petz with the $[[4,1]]$ code, which performs similarly to the Fletcher optimal recovery. Therefore, the optimal code in Eq.\eqref{eq:liang_code} with $\cR_{P,\cE}$ outperforms the Fletcher optimal recovery  with [[4,1]]-Leung code also. We further fit the numerically obtained data in the Fig.\ref{fig:opt_code_syn_petz} with a fifth-order polynomial $1- \sum\limits_{i =1}^{5} a_i \gamma^i$. The curve fitting shows that the entanglement fidelity $F_{\rm ent} \approx 1- 1.05 \gamma^2 + \cO(\gamma^3)$, which is similar to the reported entanglement fidelity in \cite{liang_jiang_opt_code}. We obtained this optimal fidelity by choosing the new Kraus operators $\{E_kP\}$ in a specific manner. Otherwise, we could have a different fidelity as described in the Appendix \ref{app:new_kraus_cons}
Along with the optimal performance, the code in Eq.\eqref{eq:liang_code} and recovery obtained through the numerical search or our algorithm have a drawback. We note that the recovery operators, the numerically obtained and our syndrome-based Petz, depend on the noise strength. However, to implement the recovery, we can fix the noise-strength dependent parameters in the recovery and the code, as we know the qubit specification on the hardware. However, on the hardware, each qubit has its own separate $T_1$; hence, the $\gamma$, the damping parameters, are different. However, to implement the QEC with the code in Eq.\eqref{eq:liang_code}, we can stick to an average value of the $\gamma$ of the qubits participating in the encoding operation. We can also fix the parameters for the recovery operation (syndrome-based Petz) according to the encoding parameters. However, there will be a trade-off. We will lose the optimality in this practice. We can bypass these difficulties with the Leung code and the syndrome-based Petz, which we adapted to it. There we can see from the Table \ref{tab:rec_tab} that, at least in estimating the logical $T_1$, i.e, to estimate the improved $T_1$ after the QEC, the QEC circuit is independent of $\gamma$ -- the noise strength.       
  
\subsection{Six-qubit code }
So far we have dealt with the amplitude-damping noise and the noise-adapted code. Now we demonstrate the performance of a perfect degenerate code under Pauli noise, namely the depolarizing noise, which has the following Kraus operators
\begin{align}
A_0 = \sqrt{1-p}I, A_1=\sqrt{\frac{p}{3}}X,A_2=\sqrt{\frac{p}{3}}Y,A_3=\sqrt{\frac{p}{3}}Z,
\end{align}
where the $\{X,Y,Z\}$ are the Pauli matrices.

To correct the depolarizing error, we consider a degenerate [[6,1,3]] code \cite{wilde_six} with the following stabilizer generators
\begin{align}
\cS = \langle YIZXXY,ZXIIXZ,IZXXXX,\nonumber\\IIIZIZ,ZZZIZI\rangle.
\end{align}

Note that the six-qubit code \cite{wilde_six} has a distance $d=3$, implying that it can perfectly correct all weight-one Pauli errors. We also note that the dimension of the syndrome space is $2^{n-k}= 32$. Therefore, apart from the eighteen weight-one Pauli errors, we can correct some weight-two Pauli errors. Using our orthogonalisation algorithm, we find that the following two-qubit errors are also correctable
\begin{align}\label{eq:two_wight_error}
\{X_5X_6,X_4X_6,X_3X_6,X_2X_6,X_1X_6,X_3X_5,\nonumber\\
X_3X_4,X_2X_4,X_1X_4,X_1X_2,Z_1Z_5,Z_2Z_6,Z_2Z_4\}.
\end{align}

Therefore, one can notice that, including the identity, weight-one, and weight-two errors, we have a total of thirty-two errors which can be corrected through the syndrome-based Petz. We demonstrate the performance of the syndrome-based Petz under depolarizing noise in Fig.~\ref{fig:six_depol}. We see that the stabilizer-measurement-based recovery, i.e., the standard recovery, performs similarly to the syndrome-based Petz, while the Petz recovery $\cR_{P,\cA}$ (here the $\cA$ contains all the noise operators ranging from identity to weight-six) performs worse than the other two recoveries.   
\begin{figure}
    \centering
    \includegraphics[width=1.0\columnwidth]{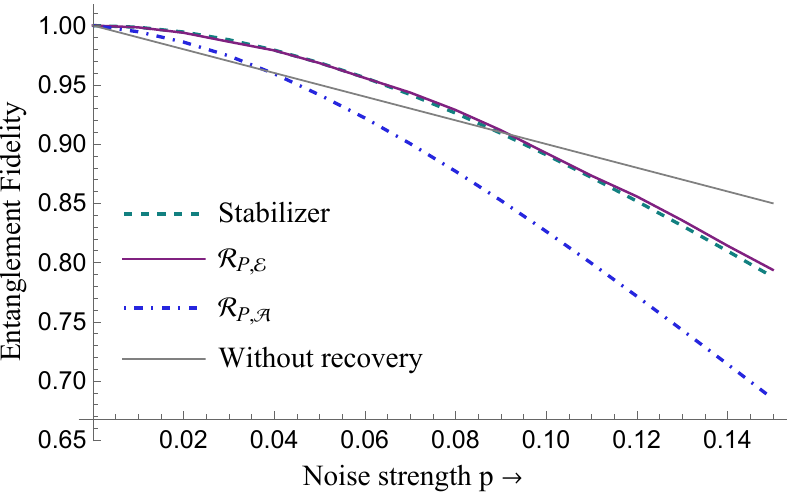}
    \caption{Performance of different recovery operation for the [[6,1,3]] code under the depolarizing noise process.}
    \label{fig:six_depol}
\end{figure}
However, from Fig. \ref{fig:six_amp} we note that the Petz map with the degenerate [[6,1,3]] code performs better than the syndrome-based Petz map under the amplitude-damping noise .

\begin{figure}
    \centering
    \includegraphics[width=1.0\columnwidth]{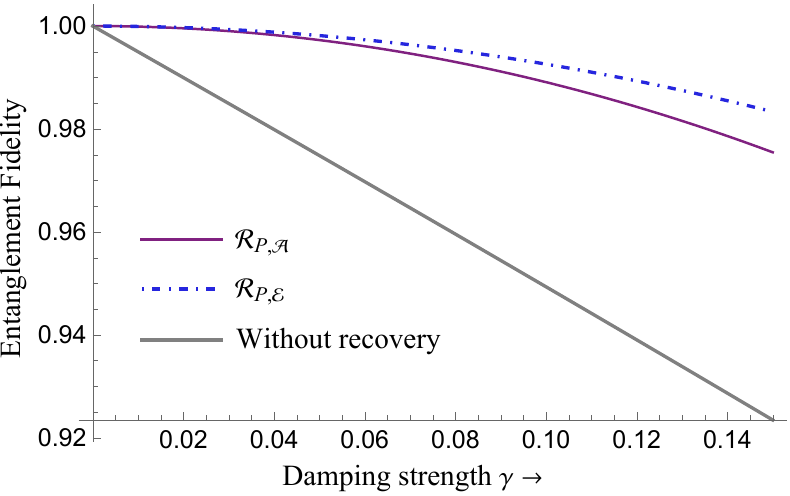}
    \caption{Performance of different recovery with the six-qubit degenerate code under the amplitude-damping noise process. }
    \label{fig:six_amp}
\end{figure}

\section{Circuit implementation of the syndrome-based Petz recovery}\label{sec:implementation}

Here in this section, we make use of our orthogonalisation procedure to design the circuit for the Petz recovery map and show that the circuit can correct the single-qubit errors on the hardware.

To begin with, we first note that there are other possible methods for designing the circuits for a Petz recovery or any CPTP recovery operation. One of the pioneering works is based on the \emph{QSVT} algorithm by Gilliyen \emph{et al.} \cite{gilyen2022_petz} but the resource complexity is exponentially high. However, there exist other methods \cite{biswas2024noise} to implement noise-adapted recovery protocols in a circuit but because of their high resource requirements, the protocols cannot demonstrate the QEC on the hardware. Notably other approaches of implementing a single qubit Petz map exists by Wen-Han \emph{et al.} \cite{vscarani_petz} on the trapped ion processor and the recent article shows the implementation on \textsc{NMR} processor \cite{iiser_mohali_petz}. However, all these experimental implementations is for the state-specific Petz map [see the Chap. 10 of \cite{mm_wilde}] . There has been other approaches based on the amplitude amplification and QSVT for implementing a \emph{Petz -like} map \cite{Petz_like}, but again the resources are too high to be implementable on a physical hardware. One of the notable generalisation of the Petz map have been proposed by Basak \emph{et al.} \cite{basak2025approximate} in the context of dynamic AQEC code. However, the usefulness of the Petz map on dealing with the hardware noise and an efficient implementation of it in the context of QEC is missing from the literature. Before proceeding with our implementation we stress the fact that some well established methods \cite{iiser_mohali_2022} for designing the circuit for a CPTP map can also be used to implement Petz map, but again to deal with larger number of qubit will cost huge circuit complexity. In our work we particularly focus on the QEC with the syndrome based Petz map based on the [[4,1]]-Leung code.  

Here we bring in the syndrome detection methods through stabilizer measurements in our circuit implementation for the syndrome-based Petz map in order to perform the noise adapted QEC on the hardware. Note that in general for a noise-adapted recovery, the syndrome detection is not possible but the algorithm in the Sec.\ref{sec:algorithm} makes syndrome detection possible for arbitrary quantum codes. Since we have chosen a superconducting hardware for demonstrating the noise adapted error correction, we recall that amplitude-damping noise is the dominating noise process in the hardware. Therefore, we consider the [[4,1]] code in Eq.\eqref{eq:leung} to correct single qubit amplitude-damping errors. For the recovery procedure we consider  the syndrome-based Petz recovery $\cR_{P, E}$ adapted to the four-qubit code in Eq.\eqref{eq:leung} and the amplitude-damping noise. Kraus operators for the syndrome-based Petz $\cR_{P, E}$ are listed in the Table~\ref{tab:rec_tab}. In our circuit implementation, we do not implement the full recovery, with which we can correct some two-qubit errors also, instead, we construct a map $\Tilde{\cR}$ with the Kraus operators $\tilde{\cR}\sim\{R_0, R_1, R_2, R_3, R_4\}$. The $\Tilde{\cR}$ can correct only the single-qubit amplitude-damping errors $ \{D_{1000},D_{01000},D_{0010},D_{0001}\}$ and the no-damping effect $D_{0000}$, approximately.

To proceed with the implementation we first note that the Kraus operators of the map $\Tilde{\cR}$ have the following form 
\begin{align}
    R_k &= G_k \Pi_k,
\end{align}

Where $\{\Pi_k= R_k^{\dagger}R_k\}$s are the projector onto the syndrome subspaces and $G_k$s are the unitary though which we execute the recovery. Therefore the recovery $\Tilde{\cR}$ can be implemented on a circuit by measuring the syndrome first and then followed by applying a recovery through the unitary operator $G_k$ according to the syndrome outcome. 
 \begin{figure}[t]
     \centering
     \begin{subfigure}[t]{0.3\textwidth}
         \centering
         \includegraphics[width=\textwidth]{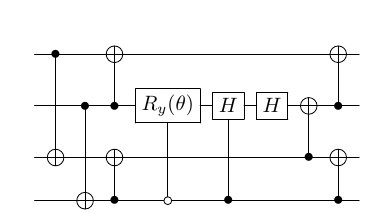}
         \caption{Circuit for the Unitary $G_0$ }
         \label{fig:uni0}
     \end{subfigure}
     \begin{subfigure}[t]{0.175\textwidth}
         \centering
         \includegraphics[width=\textwidth]{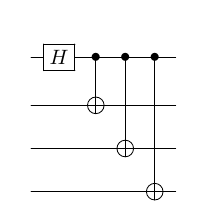}
         \caption{Circuit for the Unitary $G_1$ }
         \label{fig:uni1}
     \end{subfigure}
     \begin{subfigure}[t]{0.15\textwidth}
         \centering
         \includegraphics[width=\textwidth]{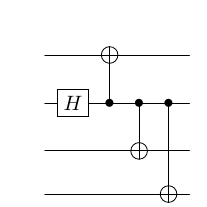}
        \caption{Circuit for the Unitary $G_2$ }
         \label{fig:uni3}
     \end{subfigure}
       \begin{subfigure}[t]{0.15\textwidth}
         \centering
         \includegraphics[width=\textwidth]{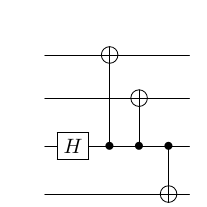}
       \caption{Circuit for the Unitary $G_3$ }
         \label{fig:uni4}
     \end{subfigure}
       \begin{subfigure}[t]{0.15\textwidth}
         \centering
         \includegraphics[width=\textwidth]{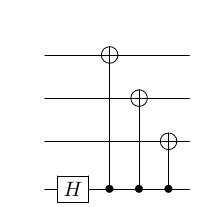}
        \caption{Circuit for the Unitary $G_4$ }
         \label{fig:uni5}
     \end{subfigure}
        \caption{Circuits for the Unitary matrices $G_k$s. }
        \label{fig:Uks}
\end{figure}
To detect the accurate syndrome subspaces, we first measure the stabilizer generators $ \cS^1_z =\langle  ZZII,IIZZ \rangle$. We denote $p_1$ and $p_2$ as the outcome of the measurement of $ZZII$ and $IIZZ$ respectively and call them the \emph{primary syndrome}. Since the primary-syndromes are incapable of concluding a deterministic syndrome extraction, meaning we cannot conclude which qubit has faced the error, we perform secondary syndrome checks by measuring the operators $\cS_z^2 = \langle ZIII, IIIZ\rangle$. Note that the operators in $\cS_z^2$ do not stabilise the code-space but they  stabilize the noisy subspaces. The outcomes of the complete syndrome measurements (primary + secondary) are listed in Table \ref{tab:syn_tabl}. In that table we should note that while applying the recovery $G_0$ we measure only the primary syndromes and measuring the secondary syndromes are not necessary. We also note that the secondary checks are conditioned on the primary syndrome, meaning if any of the primary syndromes ($p_1$ or $p_2$) are non-zero then only the secondary syndrome extractions are executed. The complete circuit for the syndrome extraction is shown in the Fig.\ref{fig:full_rec}.    

\begin{table}[t!]
    \centering
    \begin{tabular}{|c||c  c | c c || c|}
    \hline
      &\multicolumn{2}{c|}{$\cS_z^1$} &\multicolumn{2}{c||}{$\cS_z^2$} & \\
      \hline
    Noise     & $ZZII \, (p_1)$  & $IIZZ \, (p_2)$ & $ZIII \, (s_1)$ & $IIIZ \, (s_2)$ &  Recovery\\
    \hline
    $D_{0000}$     & 0 &0 & $\times$ &$\times$ & $G_{0}$\\
    $D_{1000}$     & 1 &0 & 1&0 & $G_1$\\
    $D_{0100}$     & 1 &0  & 0 &0 & $G_2$\\
    $D_{0010}$     & 0& 1 & 0 &0 & $G_3$\\
    $D_{0001}$     &0 &1  &0& 1& $G_4$\\
    \hline
    \end{tabular}
    \caption{Syndrome table for the $[[4,1]]$ code under amplitude-damping noise. Here, the parameters $p_i$ and $s_i$ denote the measurement outcome of the operators in $\cS_z^1$ and $\cS_z^2$, respectively. We do not conduct the extraction of the secondary syndrome if the $(p_1, p_2)=(0,0)$, that is what the $\times$ symbol represents. }
    \label{tab:syn_tabl}
\end{table}

Now we note that each unitary $G_k$ are 4-qubit unitary matrices. So to implement an arbitrary $4$-qubit unitary we at least $256$ single+ two - qubit gates. But here we will not be implementing the whole unitary $G_k$.   
Therefore, to implement the unitaries $G_k$s we first focus on the action of the $G_{k}$s on the noisy states. The $G_k$ takes the noisy state as an input and maps it to the logical code vectors. For example the unitary $G_1$ produces $|0_L\rangle \, or \, |1_L\rangle$ if the inputs are $|1110\rangle \,\, or \,\, |0010\rangle$ respectively. If the inputs are different from the vectors $|1110\rangle \,\, and \,\, |0010\rangle$, the unitary will not be effective. These selective actions of the unitary operators, meaning the $G_1$  (or in general $G_k$) will act on a perticular set of vectors and the rest of the vectors in the 4-qubit Hilbert space will remaining unaffected by the action of $G_k$,  are ensured by the syndrome measurements. For example, if we detect any of the noisy states $\{|1110\rangle, |0010\rangle\}$, we thorough the syndrome extraction, then only we apply the $G_2$. These combined operations (first syndrome then recovery) ensure that the $G_2$ will act only on the subspace spanned by $|1110\rangle \,\, and \,\, |0010\rangle$. Similarly if we detect $\{|1101\rangle \,\,{\rm or}\,|0001\rangle\}$ we apply $G_3$ and $G_4$ will be effective if and only if any of the states $\{|1110\rangle \,\,{\rm or}\,|0010\rangle\}$ are detected.  Fig.~\ref{fig:Uks} shows the circuit for the unitaries $G_k$s and the complete recovery operation respectively.

In the hardware run, for correcting the $|1_L\rangle$, we refrain ourselves from applying the circuits in Fig.\ref{fig:Uks}. Because the, on the hardware we do not have all to all qubit connectivity.     
Therefore, the transpilation of our circuit on the hardware qubit connective will increase the circuit complexity, i.e, the gate counts and the the circuit depth as well for the circuit in Fig.\ref{fig:full_rec}.  In the hardware run, we consider only the \emph{linear-nearest neighbour} qubit connectivity. Since we do not have an all-to-all qubit connection the transpilation of the circuits for the unitary $G_i$s on the hardware qubit map increases the two-qubit gate counts to \emph{forty-seven} while the circuit in the Fig.~\ref{fig:full_rec} has only \emph{twenty-seven}  two-qubit gates. The increment of the two-qubit gates can deteriorate the performance of the recovery circuit in the hardware since the two-qubit gates are more noisy than the single-qubit gates and a faulty two-qubit operation causes the error to propagate to the other data qubits. Therefore reduction of the two-qubit gates is the key requirement for a successful demonstrating the error correction on the hardware.

\begin{widetext}

\begin{figure}[t!]
    \centering
    \includegraphics[width=1.0\columnwidth]{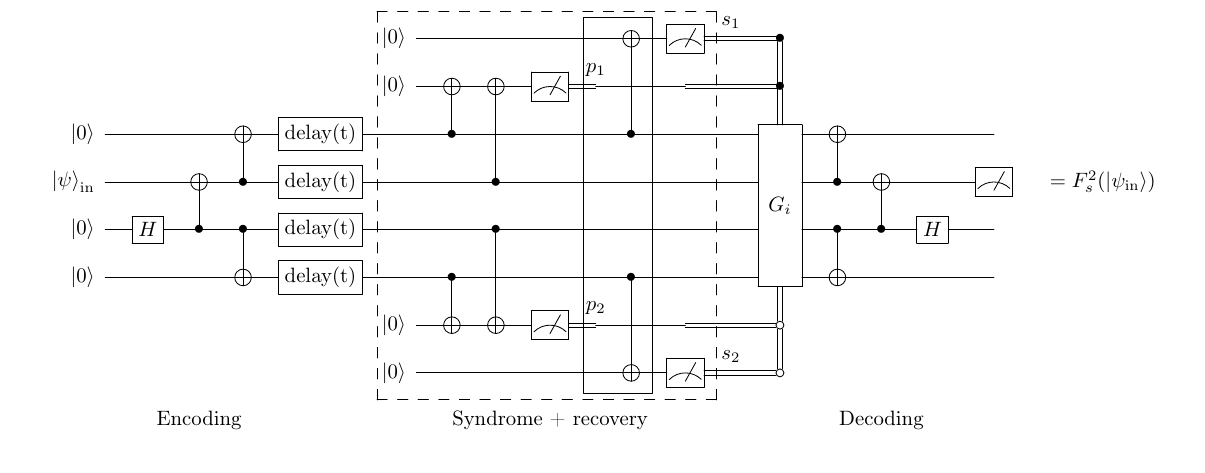}
    \caption{Combined circuit for the encoding + syndrome measurement + recovery + decoding (inverse of encoding). For the $T_1$ experiment we initiate the $|\psi\rangle_{\rm in}$  as $|1\rangle$. To carry out the $T_1$ experiment on the IBM processors we vary the delay time $t$ on the delay gates in the units of $\mu s$. To estimate the fidelity between the $|\psi\rangle$ and the recovered state $\cR_{P,\cE}\circ \cA(|\psi\Rl)$ we measure the second qubit in the $Z$- basis. A detail calculation for the fidelity estimation for input state $|\psi\rangle$=$|1\rangle$ is shown in the Appendix \ref{app:fid_estimation} }
    \label{fig:full_rec}
    
\end{figure}
\begin{center}

  \begin{table}[t]
    \begin{tabular}{| c| c|c|c|}
    \hline
    hardware  & Qubits for the Experiment & $T_1$ and $T_2$ for the input qubit & Max and Min. two-qubit errors  \\      
    \hline      
    IBM-Brisbane& Q0-Q1-Q2-Q3-Q4-Q5-Q6-Q7 & $337 \,\,\mu s$ and $272.54 \, \, \mu s$& $8.695 \times 10^{-3}$ and $3.604 \times 10^{-3}$
        \\
          \hline
    IBM-Torino (Fake backend)& Q1-Q2-Q3-Q4-Q5-Q6-Q7-Q8 & $155 \,\,\mu s$ and $272.54 \, \, \mu s$& $5.695 \times 10^{-3}$ and $3.604 \times 10^{-3}$
        \\
          \hline
    \end{tabular}
    \caption{Specification of the Backends (real and copied). The copied backend means, we can copy the noise model on the local simulator on QISKIT.  We choose a particular set of qubits so that the two qubit errors should remains below the $10^{-2}$. One more crucial point to be noted that on the Brisbane a single cycle of QEC take $10 \mu s$ and on the Torino the QEC time is $2.014 \mu s$. For both the Brisbane and Torino the average readout error is $\sim 10^{-2}$. }
    \label{tab:tab_fidelity_1}
\end{table} 
\end{center}
\end{widetext}

However, we note that if we focus on recovering a particular logical state rather than a generic logical state $|\psi_L\rangle= a |0_L\rangle+b |1_L\rangle$, the circuit complexity of the recovery circuit gets reduced further. For example, if we focus on the $|1_L\rangle$ for the Leung code, the Fig.~\ref{fig:Uk-1s} indicates that after transpilation, the two-qubit gate counts for the recovery module will not increase as the circuits for the $G_k$s recovering the $|1_L\rangle$ state are already satisfying the linear qubit connectivity of the hardware. However, the transpiled version of the syndrome detection will consist of more two-qubit gates in comparison to the circuit in Fig.~\ref{fig:full_rec} as we have some long-range entangling operations like \textsc{CNOT}s.  In Table \ref{tab:gates_tab} we compare the total two-qubit and single-qubit gate counts for both the transpiled and un-transpiled circuits for the combined circuit Fig.~\ref{fig:full_rec}. Table \ref{tab:gates_tab} also shows that the circuit that recovers the $|1_L\rangle$ has much fewer two-qubit gates than the generic state recovery circuit.
 \begin{figure}[t]
     \centering
     \begin{subfigure}[t]{0.175\textwidth}
         \centering
         \includegraphics[width=\textwidth]{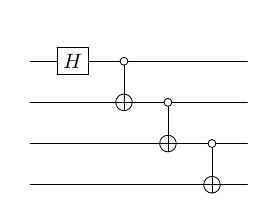}
         \caption{Circuit for the Unitary $G_1$ for $|1_L\rangle$. }
         \label{fig:uni1-1}
     \end{subfigure}
     \begin{subfigure}[t]{0.15\textwidth}
         \centering
         \includegraphics[width=\textwidth]{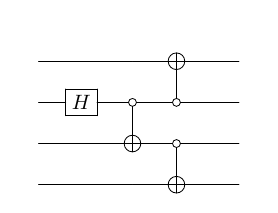}
        \caption{Circuit for the Unitary $G_2$ for $|1_L\rangle$.}
         \label{fig:uni2-1}
     \end{subfigure}
       \begin{subfigure}[t]{0.15\textwidth}
         \centering
         \includegraphics[width=\textwidth]{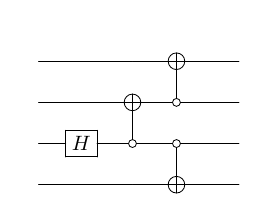}
       \caption{Circuit for the Unitary $G_3$ for $|1_L\rangle$. }
         \label{fig:uni3-1}
     \end{subfigure}
       \begin{subfigure}[t]{0.15\textwidth}
         \centering
         \includegraphics[width=\textwidth]{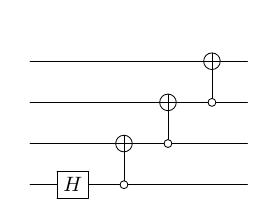}
        \caption{Circuit for the Unitary $G_4$ for $|1_L\rangle$. }
         \label{fig:uni4-1}
     \end{subfigure}
        \caption{Circuits for the Unitary matrices $U_k$s for $|1_L\rangle$. The hollow circles in the controlled-NOT are denoting the  controlled operation on the state $|0\rangle$ and the solid circles are denoting controlled on the  $|1\rangle$.  }
        \label{fig:Uk-1s}
\end{figure}
\begin{table}[t!]
    \centering
    \begin{tabular}{|c||c|c |}
    \hline
   \multirow{2}{12 em}{Recovery circuit on different devices}      
         & single qubit  & Two-qubit   \\
        & & \\  \hline
     Ideal Simulator    & 10          & 19\\
     Ideal Simulator ($|1_L\rangle$)   & 6         & 15\\
     IBM-Brisbane    & 64         & 47\\
      IBM-Brisbane ($|1_L\rangle$)    & 90         & 28\\
    \hline
    
    \end{tabular}
    \caption{Required number of resources to execute the error correction experiment on the hardware and ideal simulators. }
    \label{tab:gates_tab}
\end{table}
\begin{figure}[t!]
    \centering
    \includegraphics[width=1\columnwidth]{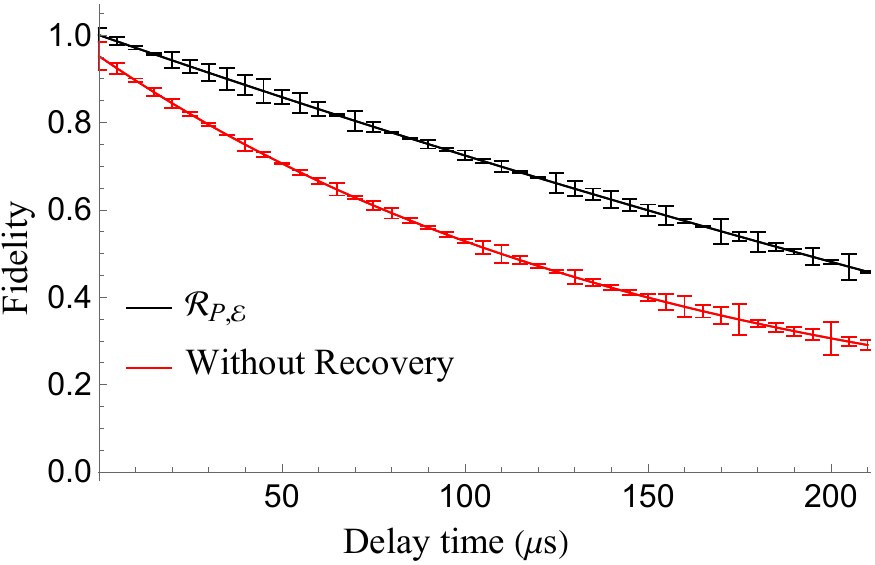}
    \caption{Performance of the syndrome based Petz map on the IBM-Torino (mocked backend). In this experiment we stick to the single cycle QEC only. The delay gates with time $t (\mu s)$ are appended in between the QEC and the encoding. For the bare qubit we apply the  We apply the same amount of delay. The delay time ranges between the $0 -230 \mu s $. On the $Y-$axis we plot the fidelity for the $|1\rangle$ state.    }
    \label{fig:petz_torr}
\end{figure}

Having optimized the two-qubit gate counts we can now aim for executing the recovery operation on the hardware. We choose two types of processors, one is the IBM-Brisbane, which is the Eagle processor and the other is the Heron processor, IBM -Torino. 
here in our experiment we take two ways to carry out the QEC through the Petz map. One is with the single cycle QEC and the other is with the two cycles QEC. On the Torino we executed one cycle QEC to generate the plot in Fig.\ref{fig:petz_torr} and on the Brisbane we have executed both the two cycle and one cycle QEC in generating the plots in Fig.\ref{fig:petz_marr}. For the lower amount of the delay on the Brisbane we have applied the one cycle of QEC and for the larger amount of delay ($t > 20$) we have considered two cycle of QEC. For the two cycles of QEC we applied the QEC at every $t/2 - \Delta t$ instant of time, where the $t$ is the time during which the noisy evolution (meaning without QEC) of the qubit takes place and $\Delta t$ is the time required for the recovery. On the Heron processor the native two-qubit gates are $CZ$ -- the controlled-$Z$ operations and \emph{gate-time} for these the $CZ$ gates are $64 \,n s$. Hence, in the Heron type processor the circuit execution time for our QEC (single cycle) circuit is around $2 \mu s$. However, in the Eagle processors the cross-resonance gate is the native two qubit gate, it is called $ECR$ gate. The gate time of a single $ECR$ is $660 \,ns$. Because of the larger gate time on Eagle processor the execution time of a single cycle QEC increases to $10 \,\mu s$.           
For more detail discussion on the multi cycle QEC set up we refer to the Appendix \ref{sec:multi_qec} and \ref{app:noise_sim}. With these strategies of multi and single round QEC we executed the $T_1$ experiment on the Eagle and the heron processors of IBM. We demonstrate the performance of the QEC with the syndrome Petz on the hardware in the Fig.\ref{fig:petz_marr}. The Figs.\ref{fig:petz_marr} and \ref{fig:petz_torr} clearly shows the improvement of the fidelity after the error correction. In order to measure the fidelity we can go with two directions, one is the state-tomography on the data qubits after the QEC. We do not follow this method since the multi-qubit state tomography is a time consuming process and thus cost more runtime on the hardware. We can perform a single qubit state tomography on the input qubit after the decoding (inverse of encoding) to estimate the fidelity, however this method is also runtime consuming (typically costs $\sim 12 s$ of runtime).  To save the runtime, we  bypass the need of tomography.  To measure the fidelity after the recovery we apply the inverse of the encoding just after the QEC and measure the ``input qubit" on the $Z-$ basis.
The complete method, starting from the encoding followed by the recovery and the inverse of encoding is exactly captured by the following operation
\begin{align}\label{eq:cw}
    \cW & \overset{\rm def}{=} U^{\dagger}_{\rm en} \circ\cR_{P,\cE}\circ \cA \circ U_{\rm en}.
\end{align}
The fidelity that we are intended to measure is the following 
\begin{align}
    F_s^2(|\psi\rangle) &= \langle \psi|\cW( |\psi\rangle\langle\psi|)|\psi\rangle.
\end{align}
Note that the fidelity $F_s^2$ serves as a cost function for finding the good code in \cite{ak_good_code} and the defining the optimality of the transpose channel \cite{hkn_pm2010}. The circuit in Fig.\ref{fig:full_rec} implements the operation $\cW$ in Eq.\eqref{eq:cw} and measuring the input qubit (the second qubit counted from the top) results in $F_s^2$. In our plots we denote the $F_s^2$ as ``Fidelity". In the Appendix \ref{app:fid_estimation} we shows just by measuring the input qubit we can estimate the fidelity $F^2_s$. To observe the improvement on the $T_1$ time we initiate the input qubit in $|1\rangle$ state and after operation $\cW$ we measure the input qubit to obtain the fidelity $F^2_s(|1\rangle)$. The $F_s^2(|1\rangle)$ denotes the population of the state $|1\rangle$ after the error correction and hence we can estimate the improved $T_1$ after the QEC just by examining the variation of $F_s^2(|1\rangle)$ against the delay time (t). We show the variation of $F_s^2(|1\rangle)$ against the delay time $t$ in the Figs. \ref{fig:petz_marr} and \ref{fig:petz_torr}. 

Making use of these data in for the $F_s^2(|1\rangle)$ we can estimate the lifetime (the $T_1$ time) of both the bare qubit and the error corrected qubit. To estimate the $T_1$ value before and the after the QEC we fit the data for the fidelity with the function $ f(t)= a + b \exp(-t /T)$.  After the fitting with the data for the bare qubit (without QEC ) we obtained the  qubit lifetime $T_{\rm bare} \sim 337 \mu s$ and $T_{\rm bare}\sim 155 \mu s$ for the IBM Brisbane and IBM Torino (fake-backend) respectively. The fitting with the data after the recovery results in $T \sim 676 \mu s$ on Brisbane and $T\sim 777 \, \mu s $ on Torino. For both the processors we kept the tolerance for the curve fitting around $\sim 10^{-8}$. Therefore we can see that the improvement in $T$ on IBM-Brisbane is about $2 \times T_{\rm bare}$ and on the IBM-Torino it is about  $5 \times T_{\rm bare}$. However, we note that the recent progresses on the QEC report a gain of more than two-times the bare qubit lifetime but they are with the perfect code and some of them are with bosonic codes \cite{girvin_break_even_2022}.     

\begin{figure}[t]
    \centering
    \includegraphics[width=1.0\columnwidth]{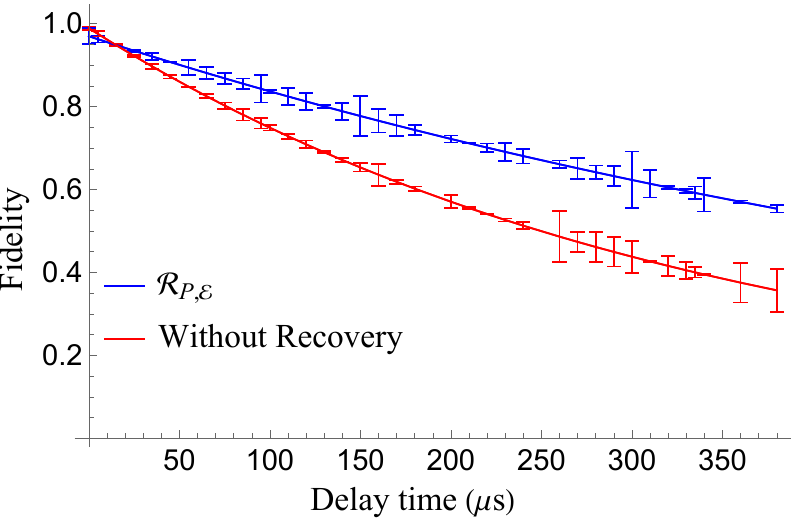}
    \caption{Fidelity \emph{vs.} the delay time in $\m s$ unit on IBM-Brisbane for single cycle QEC. On the Y-axis we plot the fidelity for the $|1\rangle$ state.}
    \label{fig:petz_marr}
\end{figure}

\section{Conclusion}

 Noise-adapted quantum error correction gives a framework to deal with the noise process by adapting it in designing the encoding, decoder, and recovery, which require fewer resources than the family picture framework of QEC through the general-purpose code. However, the framework of noise-adapted QEC generally belongs to the AQEC family. Here, we address the central challenge with the framework of AQEC -- the difficulties of detecting the syndromes as the noise subspaces overlap. We propose a subspace orthogonalisation algorithm to distinguish the subspaces, which gives a theoretical framework to measure the syndrome. Our framework makes the polar decomposition-based recovery feasible for those codes that do not satisfy the criteria posed by Leung \emph{et al.}. We also show that even for the [[4,1]] Leung code, the recovery $\cR_E$, which is a product of our subspaces orthogonalisation algorithm, performs better than the recovery proposed in \cite{leung}.

 We then show that the orthogonalisation algorithm makes some known decoders perform better, like the Petz map. This leads us to propose a new variant of the Petz recovery map, which we call the \emph{syndrome-based Petz recovery}. Although we do not know whether the syndrome-based Petz consistently outperforms the ordinary Petz,  we explicitly show that the syndrome-based Petz map performs optimally for the four-qubit Leung code and matches the performance of Fletcher SDP recovery. We also show that the syndrome-based Petz performs optimally with the optimal code proposed in \cite{liang_jiang_opt_code}. To this end, we note that Gram-Schmidt orthogonalisation (orthogonalisation for overlapping vectors) is helpful to define the search space of SDP \cite{sq_cats-2}. However, our algorithm differs from the usual Gram-Schmidt orthogonalisation procedure. 

 We then explore how well the noise-adapted recovery through our syndrome-based Petz map performs on the IBM hardware. Since amplitude-damping noise is the dominating noise process on the superconducting hardware,  we consider the four-qubit Leung code to correct it, as the code is adapted to the amplitude-damping noise. For the four-qubit code, we notice that the circuit complexity for the new variant of the Petz map is much lower than the previously proposed noise-adapted circuits \cite{biswas2024noise}. Moreover, this enables us to design a hardware-efficient circuit for the recovery. We carried out the circuit execution for the syndrome-based Petz with the [[4,1]]-Leung code on the IBM-Brisbane and IBM-Torino ( mock-backend ). We show that 
the improvement in the fidelity for the logical states is beyond the break-even point. The QEC circuit for the syndrome-based Petz shows a two times ($2\times$) improvement of the qubit lifetime on IBM-Brisbane and five times ($5\times$) on the mock-backend of IBM-Torino.

\section{Acknowledgment}
The authors thank Prof. B. M. Terhal and Prof. David P. DiVincenzo for insightful discussions. D.B thanks Vismay Joshi and Anubhab Rudra for the useful discussion on the QISKIT. This research was partly supported by a grant from the Mphasis F1 Foundation to the Centre for Quantum Information, Communication, and Computing (CQuICC). This work was also supported by a grant from the Defence Research and Development Organisation (DRDO).

\providecommand{\noopsort}[1]{}\providecommand{\singleletter}[1]{#1}%

\appendix

\section{Kraus operators of the Syndrome based Petz map}\label{app:kraus_sbpm}
To derive a more concise form of the Kraus operator of the recovery operation $\cR_{P,\cE}$, we first note from the article \cite{liang_jiang_near_opt} that the $\cE(P)^{-1/2}$ has the following form 
\begin{align}\label{eq:ep_half}
\cE(P)^{-1/2} = \sum\limits_{\m,\n,k,k}(\tilde{M}^{-3/2})_{[\m,k],[\n,l]}E_k|\m\rangle\langle \n|E_k^{\dagger} 
\end{align}
Therefore the above expression for the $\cE(P)^{-1/2}$ we can derive the analytical expression for the kraus operators for the $\cR_{P,\cE}$ as follows 
\begin{align}
    R^{P,\cE}_l & = PE_l^{\dagger}\cE(P)^{-1/2}\\
    & = \sum\limits_{\m,\n, k,k,a} (\tilde{M}^{-3/2})_{[\m,k],[\n,k]}|a\rangle\langle a| E_k^{\dagger}E_k|\m\rangle\langle \n|E_{k}^{\dagger}\\
    & =  \sum\limits_{\m,\n,k,a} (\tilde{M}^{-3/2})_{[\m,k],[\n,k]} \tilde{M}_{[a,k],[\m,k]}|a\rangle\langle \n|E_{k}^{\dagger}\\
    & = \sum\limits_{\n,k,a} (\tilde{M}^{-1/2})_{[a,k],[\n,k]} |a\rangle\langle \n|E_{k}^{\dagger}
\end{align}
Therefore comparing the description of the Kraus operators for the map $\cR_{P,\cA}$ from \cite{liang_jiang_near_opt} with the $R_l$ that we have just derived we note that the only difference between them is the QEC matrix $M$ and $\tilde{M}$. While the $\tilde{M} \propto \delta_{kl}$ the matrix $M$ is not proportional to $\delta_{kl}$ in general. Furthermore we prove that $\tilde{M} \leq M$ in the Appendix \ref{app:Qk_s_proj}

\section{Universal recovery based on polar decomposition}\label{sec:recovery}

We first note that the approximate QEC aided through the  Leung recovery requires the syndrome subspaces should be orthogonal to each other (the QEC condition in  Eq. (29) in Ref. \cite{leung}). Therefore, for a generic AQEC code Leung recovery is applicable since the syndrome subspaces may not be not orthogonal. Our subspace-discrimination algorithm enables the application of the Leung recovery for an arbitrary AQEC code.      

In performing the recovery operation, we first note that we are detecting those sub-spaces, which are the images of the new set of Kraus operators. The detection is possible since the $\{E_k\}$s maps the code space to orthogonal subspaces.  Therefore, in the recovery operation, we recover the code space from the image space of the operators $\{E_k\}$ through the unitary operators $U_k$s approximately. The recovery is still approximate since the deformation of the code space is still there even after the orthogonalisation.  Therefore,  like the Leung recovery, we define our recovery operation $\cR_{E} \equiv \{R_k, P_E\}$ as 
\begin{align}\label{eq:rec_new_1}
    R_k &= PU_k^{\dagger} = U_{k}^{\dagger}\Pi_{k},
\end{align}
where $\Pi_k= U_kPU_{k}^{\dagger}$ is the projector onto orthogonal syndrome subspaces.  We defined earlier that the $supp$ for projectors $P_i$ contains in the $supp$ of $P$. However it is necessary that there should exists sufficient numbers of projectors for which $supp (P_i) = supp(P) $. Otherwise the recovery operation would not able to correct the errors. For now we proceed with the recovery $\cR_E$. We denote the operator $PE_k^{\dagger}E_lP$ as $\tilde{M}_{kl}$. Therefore, the recovery operators can be re-written as 
\begin{align}\label{eq:rec_1}
    R_k&= \Tilde{M}^{-1/2}_{kk}P_k E_k ^{\dagger}.
\end{align}
It is quite straightforward to note that the recovery operators $\{R_k\}$ from a trace non-increasing channel as  
\begin{align}
    \sum\limits_{k} R_k^{\dagger}R_k &= \sum\limits_{k}U_{k}PU_k^{\dagger} \leq I.
\end{align}
 $\{U_kPU^{\dagger}_k=\Pi_k\}$ is a  set of mutually orthogonal projector because $PU^{\dagger}_{k}U_{\ell}P=0$. Since $\Pi_k$s are projecting to the mutually orthogonal subspaces, we can distinguish the subspaces. To implement the recovery through $\cR_E$ we first need to measure $\{\Pi_k\}$s to know which error has occurred. Then we apply the recovery through the unitary $U_k$. We show a syndrome detection circuit for the recovery $\cR_E$ in the Appendix \ref{app:uni_syn}.  However we can see in the Sec.\ref{sec:ex} the recovery by $\cR_{E}$ is not optimal or not even near optimal as it does not outperform the well known near optimal recovery -- \emph{The Petz map}. In achieving the optimality or at-least to have better performance than the ordinary Petz map, we consider a new version of the Petz map in the following section, which we call the \emph{syndrome-based Petz map}.

Now, we will provide the theoretical lower bound on the entanglement fidelity under our recovery protocol based on the polar decomposition through the following Lemma.

\begin{lemma}
    The worst-case fidelity under the recovery $\cR_E = \{PU_k^{\dagger}\}$ is lower bounded as follows 
    \begin{align}\label{eq:fid_min_bound}
        F_{\rm min}(\cR_{E}\circ \cA) & \geq \underset{|\psi\rangle \in \cC}{\rm min} \,\,\frac{1}{d^2}\sum\limits_{\m,k} |\langle \m |E_k^{\dagger}E_k |\m\rangle|^2,
    \end{align}
\end{lemma}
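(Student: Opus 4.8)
The plan is to start from the per-state fidelity, expand the recovered state in the Kraus operators of $\cA$ and of $\cR_E$, and then discard all but the ``matched'' cross terms by positivity. Writing the recovery Kraus operators as $R_k$ (either $PU_k^{\dagger}$ of Eq.~\eqref{eq:rec_new_1} or the equivalent $\tilde M_{kk}^{-1/2}P_kE_k^{\dagger}$ of Eq.~\eqref{eq:rec_1}) and the noise operators as $\{A_j\}$, for any $|\psi\rangle\in\cC$ one has
\begin{align}
F^2(|\psi\rangle) &= \langle\psi|\,\cR_E\circ\cA(|\psi\rangle\langle\psi|)\,|\psi\rangle = \sum_{j,k}\big|\langle\psi|R_kA_j|\psi\rangle\big|^2 .
\end{align}
Since every summand is non-negative, I would keep only the diagonal $j=k$ contributions to get $F^2(|\psi\rangle)\ge\sum_k|\langle\psi|R_kA_k|\psi\rangle|^2$. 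This is the only genuinely lossy inequality, and it is exactly what turns the exact recovered fidelity into a manifestly computable expression.

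The next step is to evaluate $\langle\psi|R_kA_k|\psi\rangle$ in closed form. Using $E_kP=Q_kA_kP$ from Eq.~\eqref{eq:E_k} together with the fact that each $Q_k$ is a projector (a consequence of Theorem~\ref{thm:PEkElP=0}; see Appendix~\ref{app:Qk_s_proj}), I would show
\begin{align}
PE_k^{\dagger}A_kP &= PA_k^{\dagger}Q_kA_kP = PE_k^{\dagger}E_kP = \tilde M_{kk},
\end{align}
using $Q_k^2=Q_k$, so that $\langle\psi|R_kA_k|\psi\rangle = \langle\psi|\tilde M_{kk}^{-1/2}\tilde M_{kk}|\psi\rangle = \langle\psi|\sqrt{PE_k^{\dagger}E_kP}\,|\psi\rangle$. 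This gives the intermediate bound
\begin{align}
F^2(|\psi\rangle) &\ge \sum_k\big\langle\psi\big|\sqrt{PE_k^{\dagger}E_kP}\,\big|\psi\big\rangle^{2}.
\end{align}

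To reach the stated form I would then remove the square root. Because $\{E_k\}$ is trace non-increasing, $\sum_k\tilde M_{kk}\preceq P$, so each $\tilde M_{kk}=PE_k^{\dagger}E_kP$ has spectrum in $[0,1]$; on this range $\sqrt{x}\ge x$, whence $\langle\psi|\sqrt{\tilde M_{kk}}|\psi\rangle\ge\langle\psi|\tilde M_{kk}|\psi\rangle=\langle\psi|E_k^{\dagger}E_k|\psi\rangle$ for $|\psi\rangle\in\cC$. This yields $F^2(|\psi\rangle)\ge\sum_k\langle\psi|E_k^{\dagger}E_k|\psi\rangle^2$, which I would then minimize over $|\psi\rangle\in\cC$.

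The hard part will be the final reduction: passing from $\min_{|\psi\rangle}\sum_k\langle\psi|E_k^{\dagger}E_k|\psi\rangle^2$, which still couples the logical components of $|\psi\rangle$ through the off-diagonal matrix elements $\langle\mu|E_k^{\dagger}E_k|\nu\rangle$, to the purely diagonal logical-basis quantity $\tfrac{1}{d^2}\sum_{\mu,k}|\langle\mu|E_k^{\dagger}E_k|\mu\rangle|^2$. This requires arguing that the worst-case state can be taken as (or bounded by) a uniform superposition of logical basis states and that the off-diagonal logical contributions can be controlled or dropped. Since the diagonal AQEC condition, Eq.~\eqref{eq:ortho_Ek}, only enforces block structure in the error index $k$ and not diagonality within a block, this is the step that is least automatic and where the estimate demands the most care.
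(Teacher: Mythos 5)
Your executed steps coincide with the paper's own proof. The paper likewise expands $F^2_{\rm min}(\cR_E\circ\cA)=\min_{|\psi\rangle\in\cC}\sum_{k,l}|\langle\psi|R_kA_lP|\psi\rangle|^2$, discards the $l\neq k$ terms, and uses the identity $PE_k^{\dagger}A_kP=PE_k^{\dagger}E_kP=\tilde M_{kk}$ (proved there by expanding $W_{k-1}=\sum_{i<k}U_iP_iU_i^{\dagger}$, which is equivalent to your $Q_k$-is-a-projector argument) to arrive at $\min_{\psi}\frac{1}{d^2}\sum_k|\langle\psi|\tilde M_{kk}^{1/2}|\psi\rangle|^2$. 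Two differences are worth noting, both in your favor. First, where you drop cross terms by positivity (after which inserting $1/d^2\le 1$ is trivial), the paper inserts the $1/d^2$ at that same step with an unexplained appeal to ``the triangle inequality''; your version of this step is the defensible one. Second, the paper stops at the square-root expression $\tilde M_{kk}^{1/2}$, whereas your observation that trace non-increase of $\cE$ gives $\sum_k\tilde M_{kk}\preceq P$, hence $\mathrm{spec}(\tilde M_{kk})\subseteq[0,1]$ and $\tilde M_{kk}^{1/2}\succeq\tilde M_{kk}$, is correct and carries the bound one step closer to the stated form than the paper ever gets.

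The gap you flag at the end is genuine, but be aware that the paper does not close it either: its chain of inequalities terminates at the $\psi$-dependent quantity above and is simply declared to prove the lemma, whose right-hand side $\frac{1}{d^2}\sum_{\mu,k}|\langle\mu|E_k^{\dagger}E_k|\mu\rangle|^2$ involves only diagonal logical matrix elements and does not depend on $|\psi\rangle$ at all, so the minimization written there is vacuous --- a sign the statement itself is loose. Moreover, no pointwise matrix inequality can bridge the two expressions: take $d=2$ and a single operator $E_1$ with $PE_1^{\dagger}E_1P=|0_L\rangle\langle 0_L|$ (trace non-increasing); then $\min_{\psi}\sum_k\langle\psi|E_k^{\dagger}E_k|\psi\rangle^2=0$ at $|\psi\rangle=|1_L\rangle$, while $\frac{1}{d^2}\sum_{\mu,k}|\langle\mu|E_k^{\dagger}E_k|\mu\rangle|^2=\frac{1}{4}$. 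Any correct completion would have to exploit completeness of the error set (trace preservation of the original channel $\cA$, which rules out such isolated $\{E_k\}$), and neither your argument nor the paper's invokes that at this stage. So your assessment of where the difficulty lies is accurate; the unresolved step is a defect of the paper's lemma and proof, not of your reasoning.
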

\begin{proof}

To begin with, we first note the following equality 
\begin{align}
    PE_k^{\dagger}E_kP &= (PA_k^{\dagger}- PA_k^{\dagger}W_{k-1} )(A_kP- W_{k-1}A_kP )\\
 \label{eq:PE_kE_KP=PA_kA_KP}   & = PA_k^{\dagger}A_kP - PA_k^{\dagger}W_{k-1}A_kP= PE_k^{\dagger}A_kP
\end{align}

   The worst-case fidelity $F^2_{\rm min}(\cR_E \circ \cA)$ is the following
    \begin{align}
        F^2_{\rm min}(\cR_E \circ \cA) & = \underset{|\psi\rangle \in \cC}{\rm min}  \sum\limits_{k,l}^{N_E,N_A}|\Ll\psi| R_k A_lP|\psi\Rl|^2\\
   \label{eq:2nd_line}      F^2_{\rm min}(\cR_E \circ \cA)&\geq  \underset{|\psi\rangle \in \cC}{\rm min} \frac{1}{d^2} \sum\limits_{k}^{N_E}|\Ll\psi| R_k A_lP|\psi\Rl|^2\\
         &= \underset{|\psi\rangle \in \cC}{\rm min}  \frac{1}{d^2} \sum\limits_{k}^{N_E}|\Ll\psi| PU_k A_lP|\psi\Rl|^2\\
           & = \underset{|\psi\rangle \in \cC}{\rm min}  \frac{1}{d^2} \sum\limits_{k}^{N_E}|\Ll\psi| \Tilde{M}_{kk}^{-1/2}P E_k^{\dagger}A_kP |\psi\Rl|^2\\
\label{eq:5th_line}      &= \underset{|\psi\rangle \in \cC}{\rm min}  \frac{1}{d^2} \sum\limits_{k}^{N_E}|\Ll \psi| \Tilde{M}_{kk}^{-1/2}P E_k^{\dagger}E_kP |\psi\rangle|^2\\
      & = \underset{|\psi\rangle \in \cC}{\rm min} \frac{1}{d^2} \sum\limits_{k}^{N_E}|\Ll\psi| \Tilde{M}_{kk}^{1/2}|\psi\Rl|^2
    \end{align}
 Here $W_{k-1} =\sum\limits_{i=1}^{k-1}U_kP_KU_k^{\dagger}$. Here also the $\tr[.]$ is taken over the support of code-space.  We use the triangle inequality to arrive at the inequality in Eq.\eqref{eq:2nd_line} and for the inequality in Eq.\eqref{eq:5th_line} we use Eq.\eqref{eq:PE_kE_KP=PA_kA_KP}.
  
\end{proof}

Now we note that to correct the errors $A_ks$ we first have to detect the orthogonal subspaces forms by the operators $E_kP$s. Total number of $E_KP$s,i.e, $N_E$ should be such that $F_{\rm ent}(\cR_E \circ \cA) > 1-\cO(\epsilon^{t+1}) $. Following the \cite{leung} we see that the total detection probability $P_{\rm det} = \sum_k\langle \psi|E_k^{\dagger}E_k|\psi\rangle$ should be greater than $F$ for a code to be $\cO(\epsilon^{t})$ correctable.

Since, after the orthogonalisation we are in the same framework provided by Leung et al. \cite{leung}, we argue that the $P_{\rm det}> F$ only if the difference between the  minimum and the maximum eigenvalue of $\tilde{M}_{kk}$ is $\cO(\epsilon^{t+1})$ then only the $F > 1- \cO(\epsilon^{t+1})$.

Now we can see that if the vectors $E_k|\m\rangle$  deviates as $\cO(\epsilon^{t+1})$ from the vectors $A_k|\m\rangle$ then the recovery $\cR_E$ corrects the noise upto $\cO(\epsilon)$.

\section{Proofs for the recovery operations  in Sec.\ref{sec:recovery} and \ref{sec:syn_petz}}\label{app:Qk_s_proj}
Here in the section we first show $PU_k^{\dagger}U_\ell P=0$, which is necessary for the recovery in the Sec.\ref{sec:recovery} to be amenable. We use this orthogonality in the unitary operations to show the operator $\cE(P)^{-1/2} \succeq \cA(P)^{-1/2}$.

\begin{lemma}\label{lem:UkU_l=0}
 The orthogonality $PE_{k}^{\dagger}E_{\ell}P =0$ implies that $PU_k^{\dagger}U_{\ell} P=0$.
 \end{lemma}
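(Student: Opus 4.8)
The plan is to convert the orthogonality of the operators $E_k P$ into orthogonality of the polar-decomposition partial isometries $U_k$, by substituting the polar decompositions into the hypothesis and then stripping off the positive factors with pseudo-inverses, in exactly the style of the proof of Theorem~\ref{thm:PEkElP=0}. Writing $\tilde{M}_{kk} = P E_k^{\dagger}E_k P$ and using $E_k P = U_k \tilde{M}_{kk}^{1/2}$, where $\tilde{M}_{kk}^{1/2}$ is supported on $P_k$, the assumption $P E_k^{\dagger}E_\ell P = 0$ reads
\begin{align}\label{eq:plan_polar}
    P E_k^{\dagger}E_\ell P = \tilde{M}_{kk}^{1/2}\, U_k^{\dagger}U_\ell\, \tilde{M}_{\ell\ell}^{1/2} = 0 .
\end{align}

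First I would peel off the positive factors. Multiplying \eqref{eq:plan_polar} on the left by the pseudo-inverse $\tilde{M}_{kk}^{-1/2}$ and on the right by $\tilde{M}_{\ell\ell}^{-1/2}$, and using that the pseudo-inverse acting on the support satisfies $\tilde{M}_{kk}^{-1/2}\tilde{M}_{kk}^{1/2} = P_k$ and $\tilde{M}_{\ell\ell}^{1/2}\tilde{M}_{\ell\ell}^{-1/2} = P_\ell$ (the same cancellation used in Eq.~\eqref{eq:E1E2_line_3}), one is left with the reduced orthogonality $P_k U_k^{\dagger}U_\ell P_\ell = 0$.

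Second, I would upgrade $P_k,P_\ell$ back to the code projector $P$. The key structural input is that each $U_k$ is the \emph{partial isometry} of the polar decomposition, with initial projector $U_k^{\dagger}U_k = P_k$; consequently $U_k(I-P_k)=0$, since $\|U_k(I-P_k)\|^2 = \|(I-P_k)P_k(I-P_k)\| = 0$, so $U_k = U_k P_k$ and hence $U_k P = U_k P_k$ because $P_k \le P$ (the support of $\tilde{M}_{kk}$ lies inside the codespace). Taking adjoints gives $P U_k^{\dagger} = P_k U_k^{\dagger}$, and combining, $P U_k^{\dagger}U_\ell P = P_k U_k^{\dagger}U_\ell P_\ell = 0$, which is the assertion.

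The delicate point will be bookkeeping of supports in these two reductions: one must verify that $U_k$ genuinely annihilates the part of the codespace outside $P_k$ (so that $U_k P = U_k P_k$ rather than something larger), and that the pseudo-inverses collapse $\tilde{M}_{kk}^{1/2}$ to the projector $P_k$ and not to the identity. Both follow once one consistently records that $\mathrm{supp}(\tilde{M}_{kk}) = P_k \le P$ and $U_k^{\dagger}U_k = P_k$; after that the algebra is immediate. As a cross-check I would also note a purely geometric argument: since $\tilde{M}_{kk}^{1/2}$ is invertible on its support, $\mathrm{range}(U_k P) = \mathrm{range}(E_k P)$, so $P E_k^{\dagger}E_\ell P = 0$ forces these ranges to be mutually orthogonal, whence $P U_k^{\dagger}U_\ell P = 0$ directly.
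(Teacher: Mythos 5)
Your proof is correct and takes essentially the same route as the paper: both arguments express the polar-decomposition isometries through the pseudo-inverse relation $U_k P_k = E_k P\,\tilde{M}_{kk}^{-1/2}$ and then invoke the hypothesis $P E_k^{\dagger}E_\ell P = 0$ to kill the middle factor. The only difference is that you make explicit the support bookkeeping — $U_k^{\dagger}U_k = P_k$, hence $U_k P = U_k P_k$, which is what licenses replacing $P$ by $P_k$ — whereas the paper's one-line substitution $P U_k^{\dagger}U_\ell P = P\tilde{M}_{kk}^{-1/2}P E_k^{\dagger}E_\ell P \tilde{M}_{\ell\ell}^{-1/2}P$ leaves this partial-isometry convention implicit.
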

 \begin{proof}
     We consider the following 
     \begin{align}
      P U_k^{\dagger}U_\ell P = P\Tilde{M}_{kk}^{-1/2}P E_k^{\dagger}E_{\ell}P   \Tilde{M}_{\ell \ell}^{-1/2}P.
     \end{align}
     Here all the inverse operations are the pseudo-inverses and the operator $\Tilde{M}_{ii}= PE_{i}^{\dagger}E_iP$. Therefore using the orthogonality $PE_{k}^{\dagger}E_{\ell}P=0 $ for-all $k\neq \ell$ we have $PU_k^{\dagger}U_{\ell}P =0$. 
 \end{proof}



\begin{lemma}\label{lem:Ep<Ap}
    The operator $\cE(P)^{-1/2}- \cA(P)^{-1/2}$ is positive semi definite.
\end{lemma}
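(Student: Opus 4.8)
The plan is to deduce the operator inequality from the more elementary comparison $\cE(P) \preceq \cA(P)$ together with the anti-monotonicity of the inverse square root. On positive operators the map $X \mapsto X^{-1/2}$ is operator monotone \emph{decreasing}: if $0 \preceq \cE(P) \preceq \cA(P)$ and $\mathrm{supp}\,\cE(P) \subseteq \mathrm{supp}\,\cA(P)$, then $\cE(P)^{-1/2} \succeq \cA(P)^{-1/2}$, which is exactly the claim. So the whole lemma reduces to establishing $\cA(P) - \cE(P) \succeq 0$; the inverse-square-root step is then a one-line appeal to operator monotonicity, with a short check that the supports are nested so that the pseudo-inverses defined on $\mathrm{supp}\,\cE(P)$ behave.

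For the core inequality I would use the explicit relation $E_k P = Q_k A_k P$ from Eq.~\eqref{eq:E_k}, where $Q_k = I - \sum_{i<k} U_i P_i U_i^{\dagger} =: I - \Pi^{<k}$ is a projector and the $\Pi_i = U_i P_i U_i^{\dagger}$ are mutually orthogonal (a fact I would import from Lemma~\ref{lem:UkU_l=0}, since $P_i U_i^{\dagger} U_j P_j = 0$ forces $\Pi_i \Pi_j = \delta_{ij}\Pi_i$). Writing $M_k := A_k P A_k^{\dagger} \succeq 0$ and using $P^2 = P$, I have $\cA(P) = \sum_k M_k$ and $\cE(P) = \sum_k Q_k M_k Q_k$, so that
\begin{equation}
\cA(P)-\cE(P)=\sum_k\big(M_k-Q_k M_k Q_k\big)=\sum_k\big(\Pi^{<k}M_k+M_k\Pi^{<k}-\Pi^{<k}M_k\Pi^{<k}\big),
\end{equation}
and the task becomes showing this sum is positive semidefinite.

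The main obstacle is precisely this positivity. The individual summands are \emph{not} PSD on their own — $M_k - Q_k M_k Q_k$ is just $M_k$ with its $(\perp,\perp)$ block (relative to $\Pi^{<k}$) deleted, and zeroing a diagonal block of a PSD matrix destroys positivity unless the corresponding off-diagonal blocks already vanish — so a term-by-term argument fails and the summands must be controlled collectively. Here I would exploit the hierarchy the construction rests on: $E_1 = A_1$ is the dominant ``no-error'' branch with $M_1$ close to $P$, the remaining $M_k$ are small, and $\Pi^{<k}$ removes exactly the components of $A_k P$ already accounted for in the earlier, mutually orthogonal syndrome subspaces. Organising the sum by the Gram--Schmidt bookkeeping of Sec.~\ref{sec:algorithm} — tracking how the removed pieces $\Pi_i A_k P$ reassemble — should let me rewrite $\cA(P)-\cE(P)$ as a manifestly PSD combination of terms of the form $X^{\dagger}X$, at least to the perturbative order at which the AQEC conditions~\eqref{eq:beny_oreskov} hold. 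I expect essentially all the work to lie in controlling these cross terms rather than in the final monotonicity step, and I would first verify the correct grouping on the amplitude-damping/Leung example of Sec.~\ref{sec:ex} before attempting the general statement.
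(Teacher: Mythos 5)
Your overall route --- first establish $\cA(P)\succeq\cE(P)$, then apply anti-monotonicity of $X\mapsto X^{-1/2}$ together with a support check for the pseudo-inverses --- is exactly the reduction the paper makes: its proof likewise argues an expectation-value version of $\langle\psi|\cE(P)|\psi\rangle\le\langle\psi|\cA(P)|\psi\rangle$ and then invokes the same monotonicity step in one line. The problem is that you never actually establish the core inequality. Everything after ``the task becomes showing this sum is positive semidefinite'' is a plan, not an argument (``should let me rewrite\ldots'', ``I expect\ldots'', ``I would first verify\ldots on the Leung example''), and the hedge ``at least to the perturbative order at which the AQEC conditions hold'' would in any case prove something strictly weaker than the lemma, which asserts an exact operator inequality, not one up to $\cO(\|B_{kl}\|)$ corrections.

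More importantly, the plan cannot be completed in the form you set it up, and your own block-matrix observation explains why. Already with two Kraus operators, $\cA(P)-\cE(P)=\Pi_1M_2+M_2\Pi_1-\Pi_1M_2\Pi_1$, which in the block decomposition induced by $\Pi_1$ reads $\bigl(\begin{smallmatrix}\Pi_1M_2\Pi_1 & B\\ B^{\dagger} & 0\end{smallmatrix}\bigr)$ with $B=\Pi_1M_2Q_2$; a PSD matrix with a vanishing diagonal block must have vanishing off-diagonal blocks, so this is PSD iff $B=0$, and the construction only guarantees $PA_1^{\dagger}Q_2A_2P=0$ (Theorem~\ref{thm:PEkElP=0}), which does not force $\Pi_1A_2PA_2^{\dagger}Q_2=0$. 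A concrete check: take single-qubit amplitude damping $\{D_0,D_1\}$ with the one-dimensional code $P=|+\rangle\langle+|$. Then $\Pi_1$ projects onto $D_0|+\rangle\propto|0\rangle+\sqrt{1-\gamma}\,|1\rangle$, and one computes
\begin{align}
\langle 1|\bigl(\cA(P)-\cE(P)\bigr)|1\rangle=-\frac{\gamma(1-\gamma)}{2(2-\gamma)^2}<0,
\end{align}
and numerically at $\gamma=1/2$ even $\langle 1|\bigl(\cE(P)^{-1/2}-\cA(P)^{-1/2}\bigr)|1\rangle<0$. So the unrestricted operator inequality you are aiming at is simply false, and no regrouping of $\cA(P)-\cE(P)$ into a sum of terms $X_j^{\dagger}X_j$ can exist. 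The statement only survives in the restricted sense the paper's proof actually asserts --- a comparison of quadratic forms on the supports of the $Q_k$'s, where the offending cross blocks are compressed away --- so a correct write-up must build that support restriction into both the argument and the statement from the start, rather than target the full-space inequality and hope the cross terms cancel.
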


\begin{proof}
  Consider $|\phi\rangle$ be an arbitrary state in the support of $ Q_k$. Consider an other state $|\psi\rangle$ which we can write as
    \begin{align}
        |\psi\rangle & = \alpha|\phi\rangle +\beta|\phi_{\perp}\rangle.
    \end{align}\
    Therefore, taking the expectation value of the $\cE(P)$ with $|\psi\rangle$ we have following 
    \begin{align}
        \langle\psi| E_k PE_k^{\dagger} |\psi\rangle & = |\alpha|^2 \langle\phi| A_k PA_k^{\dagger} |\phi\rangle\\
       \sum_k  \langle\psi| E_k PE_k^{\dagger} |\psi\rangle  & \leq\sum_k  \langle\psi| A_k PA_k^{\dagger} |\psi\rangle \,\,{\rm since}\,\,|\alpha|^2\leq1\\
         \langle\psi|\cE(P)|\psi\rangle & \leq  \langle\phi|\cA(P)|\phi\rangle
    \end{align}
    Therefore, $\cA(P) \succeq \cE(P)$ on the support of the operators $Q_k$s. Hence $\cE(P)^{-1/2}\succeq \cA(P)^{-1/2}$ and also $\cE(P)^{-1/4}\succeq \cA(P)^{-1/4}$.
\end{proof}

    \begin{lemma}\label{lem:M>Mt}
        The modified QEC matrix $\tilde{M}_{kk}$ is lesser than $M_{kk}$.
    \end{lemma}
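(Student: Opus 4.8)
The plan is to reduce the claimed operator inequality to the positivity of a complementary projector. The essential input is the factorization $E_k P = Q_k A_k P$ from Eq.~\eqref{eq:E_k}, together with the fact, established in Appendix~\ref{app:Qk_s_proj}, that each $Q_k = I - \sum_i U_i P_i U_i^\dagger$ is an \emph{orthogonal} projector, so that $Q_k^\dagger = Q_k$ and $Q_k^2 = Q_k$.

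First I would rewrite the modified QEC matrix entirely in terms of the original Kraus operators. Using $E_k P = Q_k A_k P$ and $Q_k^\dagger Q_k = Q_k$,
\begin{align}
\tilde{M}_{kk} = P E_k^\dagger E_k P = P A_k^\dagger Q_k^\dagger Q_k A_k P = P A_k^\dagger Q_k A_k P.
\end{align}
Subtracting this from $M_{kk} = P A_k^\dagger A_k P$ then gives
\begin{align}
M_{kk} - \tilde{M}_{kk} = P A_k^\dagger (I - Q_k) A_k P.
\end{align}
Finally, since $Q_k$ is a projector, $I - Q_k$ is the complementary orthogonal projector and hence positive semidefinite; writing it as $I-Q_k = (I-Q_k)^\dagger(I-Q_k)$, one finds for any codespace vector $|v\rangle$ that $\langle v| (M_{kk} - \tilde{M}_{kk}) |v\rangle = \| (I - Q_k) A_k P |v\rangle \|^2 \geq 0$. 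This yields $M_{kk} - \tilde{M}_{kk} \succeq 0$, i.e. $\tilde{M}_{kk} \preceq M_{kk}$, which is the assertion.

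I do not anticipate a genuine obstacle here: the whole content of the lemma is carried by the projector property of $Q_k$, which is precisely what encodes the fact that the orthogonalization step ``subtracts off'' the component of $A_k P$ already captured by the earlier syndrome subspaces, and such a projection can only shrink the operator. The one point I would be careful about is to confirm that $Q_k$ is genuinely a \emph{Hermitian} idempotent and not merely idempotent, since it is the Hermiticity that guarantees both $Q_k^\dagger Q_k = Q_k$ in the first display and the positivity of $I - Q_k$ in the last; I would import this directly from Appendix~\ref{app:Qk_s_proj} rather than re-derive it. It is also worth remarking that the same computation applied with $k\neq\ell$ recovers the orthogonality $PE_k^\dagger E_\ell P = PA_k^\dagger Q_k Q_\ell A_\ell P$, tying this lemma back to Theorem~\ref{thm:PEkElP=0}.
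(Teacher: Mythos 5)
Your proof is correct and takes essentially the same route as the paper's: both write $E_kP=(I-W_{k-1})A_kP$ with $W_{k-1}=\sum_{i=1}^{k-1}U_iP_iU_i^{\dagger}$, use the mutual orthogonality of the $U_iP_iU_i^{\dagger}$ (equivalently, that $Q_k=I-W_{k-1}$ is a Hermitian idempotent, which the paper defers to Appendix~\ref{app:Qk_s_proj}) to reduce $\tilde{M}_{kk}$ to $M_{kk}-PA_k^{\dagger}W_{k-1}A_kP$, and conclude by positivity of conjugation. The only cosmetic difference is that you phrase the simplification as $Q_k^{\dagger}Q_k=Q_k$ for the complementary projector, whereas the paper expands $(I-W_{k-1})^{\dagger}(I-W_{k-1})$ and invokes $W_{k-1}^2=W_{k-1}$; these are the same computation.
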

    \begin{proof}
    For the QEC matrix for the new set of Kraus operator we first note $\tilde{M}_{kl}= 0$ for $k\neq l$ and for the $k=l$ we have the following 
    \begin{align}
      \Tilde{M}_{kk}&=  PE_k^{\dagger}E_kP =P A_k^{\dagger}A_kP- PA_k^{\dagger}W_{k-1}A_kP-\nonumber\\&  PA_k^{\dagger}W_{k-1}A_kP + P A_k^{\dagger}W_{k-1}W_{k-1}A_kP.  
    \end{align}
    
    Here $W_{k-1}= \sum_{i=1}^{k-1} U_iP_iU_i^{\dagger}$. We note that $W_{k-1}W_{k-1}= \sum_{i,j=1}^{k-1} U_iP_iU_i^{\dagger}U_jP_iU_j^{\dagger}  $. Since $P_iU_i^{\dagger}U_jP_j=\delta_{ij}$, we the following form of the QEC matrix $\tilde{M}_{kk}$
    \begin{align}
         \Tilde{M}_{kk}&=  PE_k^{\dagger}E_lP =P A_k^{\dagger}A_kP- PA_k^{\dagger}W_{k-1}A_kP.
    \end{align}
    We also note that $\sum_{i=1}^{k-1}U_iP_iU_i^{\dagger} \succeq 0$ and $PA_k^{\dagger}(.)A_kP$ is also  positive semi definite. Therefore, 
    \begin{align}
        M_{kk}-\tilde{M}_{kk} \succeq 0,
    \end{align}
    where $M_{kk}= PA_k^{\dagger}A_kP$. Therefore, $\tilde{M}_{kk}^{-1/2}\succeq M^{-1/2}_{kk}$. 
\end{proof}
\section{Construction of the new set of Kraus operators for the amplitude-damping noise for any four-qubit code }\label{app:new_kraus_cons}
We consider $P$ to be the projector onto any of the four-qubit code spaces $\cC$ considered in Sec.~\ref{sec:ex}. In our work, we consider three four-qubit noise-adapted codes that correct amplitude-damping noise only. Among these codes, two are the optimal codes obtained from two different numerical searches. The set of Kraus operators for the four-qubit Hilbert space is the following:
\begin{align}\label{eq:amp_damp_kraus}
    \{D_{0000},D_{0001},D_{0010},D_{0100},D_{1000},D_{1001},D_{0110},D_{0101}\nonumber\\,D_{1010}D_{0011},D_{1100},D_{0111},D_{1011},D_{1101},D_{1110},D_{1111}\}.
\end{align}
To construct the new set of Kraus operators $E_k$, we first identify $E_1 P = D_{0000} P$. Therefore, $P_1$, which is the projector onto the support of $P E_1^{\dagger} E_1 P$, is simply $P$, as $E_1$ does not annihilate any of the code vectors. Next, we consider the Kraus operators $E^{(1)} \equiv {E_2, E_3, E_4, E_5}$, which are the single-weight error operators ${D_{0001}, D_{0010}, D_{0100}, D_{1000}}$, and then we choose the two-weight damping errors $E^{(2)} \equiv {D_{1001}, D_{0110}, D_{0101}, D_{1010}}$ as ${E_5, E_6, E_7, E_8}$. 

For the single-weight damping errors, the $P_i$’s, which are the projectors onto the support of $P E_i^{\dagger} E_i P$, are simply $P$, the projector onto $\cC$. This is because none of the operators in $E^{(1)}$ annihilates any code vectors. However, the operators in $E^{(2)}$ annihilate $|1_L\rangle$ in Eq.~\eqref{eq:leung}. Therefore, the support of $P E_i^{\dagger} E_i P$ for $E_i \in E^{(2)}$ contains only one logical vector.

Now we consider the errors $D_{0011}$ and $D_{1100}$. Suppose we choose the error $D_{1100}$ to construct $E_9$ with the code in Eq.~\eqref{eq:leung}:
\begin{align}
    E_{9}P&= (I - \sum\limits_{i =0}^{8}U_iP_iU_i^{\dagger})D_{1100}P,
\end{align}
we see that $P_9$ is simply $P$, since $E_9$ does not annihilate any logical vectors and $\sum\limits_{i=1}^{10} U_i P_i U_i^{\dagger} = I$. Therefore, one can see that the following operator
\begin{align}
    E_{11}P&= (I - \sum\limits_{i =0}^{9}U_iP_iU_i^{\dagger})D_{0011}P,
\end{align} 
is a null-matrix. For the optimal code in Eq.~\eqref{eq:liang_code}, the scenario is different from the four-qubit code in Eq.~\eqref{eq:leung}, as we will discuss in the subsequent sections. We now proceed to construct the syndrome-based Petz map for the codes and examine whether the syndrome-based Petz ($\cR_{P,\cE}$) outperforms the Petz map ($\cR_{P,\cA}$) or not.

\subsection{[[4,1]]-Leung code}
Having constructed the new set of Kraus operators with the [[4,1]]-Leung code in Eq.~\eqref{eq:leung}, we now proceed to construct the syndrome-based Petz map. We note that the orthogonalisation algorithm results in ten Kraus operators ${E_i P}$, while there are sixteen error operators $D_{ijkl} P$ in the actual noise process. With these ten operators, we construct the Kraus operators for $\cR_{P,\cE}$, which are listed in Table~\ref{tab:rec_tab}.

We note two crucial differences in the Kraus operators of $\cR_{P,\cE}$ compared to those of $\cR_{P,\cA}$. The first difference is that none of the operators ${R_2 - R_9}$ depends on the damping strength $\gamma$. These operators, i.e., $R_2 - R_9$, correct the weight-one and weight-two errors, while the corresponding Petz recovery operators depend on $\gamma$. We also note that for this code, $A_k P = E_k P$ for $1 \leq k \leq 9$, but $E_{10} P \neq A_{10} P = D_{0011} P$.
\subsection{[[4,1]]-Optimal code in Eq.\eqref{eq:liang_code}}
The numerically obtained code in Eq.~\eqref{eq:liang_code} performs optimally against the amplitude-damping noise under the optimal recovery, which is the result of a numerical search based on semi-definite programming (SDP), as reported in \cite{liang_jiang_opt_code}. For the code in Eq.~\eqref{eq:liang_code}, we first identify $E_0 P = D_{0000} P$ and construct all the operators ${E_1, E_2, E_3, E_4}$ for the weight-one errors.

For the weight-two errors, we consider two different flows of the orthogonalisation algorithm. In the first flow, we construct the new operators as follows:
\begin{align}
    E_5P= (I -\sum\limits_{i = 0}^{4}U_iP_iU_i^{\dagger})D_{1001}P\\
    E_6P= (I -\sum\limits_{i = 0}^{5}U_iP_iU_i^{\dagger})D_{0110}P\\
    E_7P= (I -\sum\limits_{i = 0}^{6}U_iP_iU_i^{\dagger})D_{0011}P\\
    E_8P= (I -\sum\limits_{i = 0}^{7}U_iP_iU_i^{\dagger})D_{1100}P\\
    E_9P= (I -\sum\limits_{i = 0}^{8}U_iP_iU_i^{\dagger})D_{1010}P\\
    E_{10}P= (I -\sum\limits_{i = 0}^{9}U_iP_iU_i^{\dagger})D_{0101}P,
\end{align}
and we observe that $E_{10}$ is a null matrix. We note that the support of $P E_9^{\dagger} E_9 P$ contains only one vector, $|0_L\rangle$.

In another flow of the algorithm, we consider the following:
\begin{align}
    E_5P= (I -\sum\limits_{i = 0}^{4}U_iP_iU_i^{\dagger})D_{1001}P\\
    E_6P= (I -\sum\limits_{i = 0}^{5}U_iP_iU_i^{\dagger})D_{0110}P\\
    E_7P= (I -\sum\limits_{i = 0}^{6}U_iP_iU_i^{\dagger})D_{0101}P\\
    E_8P= (I -\sum\limits_{i = 0}^{7}U_iP_iU_i^{\dagger})D_{1010}P\\
    E_9P= (I -\sum\limits_{i = 0}^{8}U_iP_iU_i^{\dagger})D_{1100}P\\
    E_{10}P= (I -\sum\limits_{i = 0}^{9}U_iP_iU_i^{\dagger})D_{0011}P,
\end{align}
and again we note that $E_{10} P$ is a null matrix. However, in this case, the support of $P E_7^{\dagger} E_7 P$ and $P E_8^{\dagger} E_8 P$ contains two vectors. If we consider the projectors $(P_7, P_8) = |0_L\rangle \langle 0_L|$ and then proceed with the construction of $R_{P,\cE}$, we achieve the optimal performance with this code.

\section{Cycles of syndrome based Petz map}\label{sec:multi_qec}

In the implementation section, we have considered two cases for the hardware implementation of the syndrome-based Petz map. One is the single-cycle QEC on the IBM-Torino, and the other is the two-cycle QEC on the Eagle processor of IBM Brisbane. The strategy for the single-cycle QEC is very straightforward: we prepare the encoded state, and then, after the noise, we apply the syndrome followed by the recovery.

However, for the two-cycle QEC, our strategy is more intricate. It is intricate because, after the noise acts on the encoded state, we do not apply two consecutive rounds of QEC. Rather, we consider that if the noise acts for $t$ units of time, we apply the recovery every $(t/2)$ units of time; i.e., we apply the following sequence of operations: 
\begin{align}
    \cR_{P,\cE^{t/2}} \circ \cA^{t/2} \circ  \cR_{P,\cE^{t/2}} \circ \cA^{t/2},
\end{align}
where $\cE^{t/2}$ is the channel resulting from the orthogonalisation algorithm acting on the $\cA^{t/2}$. For the $N-$ cycles of QEC, we consider the following sequence operations:  
\begin{align}\label{eq:n_cycle_petz}
    \underbrace{\cR_{P,\cE^{t/N}} \circ \cA^{t/N}}_{N^{th}- \mbox{cycle}} \circ  \underbrace{\cR_{P,\cE^{t/N}} \circ \cA^{t/N}}_{N-1^{th} \mbox{cycle}}\cdots \\\underbrace{\cR_{P,\cE^{t/N}} \circ \cA^{t/N}}_{\mbox{first cycle }}(|1_L\rangle\langle1_L|).
\end{align}

We execute the sequence in Eq.\eqref{eq:n_cycle_petz} on the ideal simulator, where the qubits face only amplitude-damping noise and the gates are ideal. We observe that as the number of cycles, i.e., the $N$ in Eq.\eqref{eq:n_cycle_petz} of the QEC, increases, the fidelities also increase. For example, if the qubits are subjected to amplitude-damping noise of strength $\gamma=0.1$, the increment of the fidelity follows the trend shown in Fig.\ref{fig:fid_petz_cycle}.

We then carry out the simulation of QEC with multicycle syndrome-based Petz for different choices of the damping strength $\gamma$ and observe that when the single-cycle syndrome-based Petz gives the fidelity $F^2 (\cR_{P,\cE}\circ\cA (|1\rangle\langle1|))= 1- \gamma^2$, the multicycle syndrome-based Petz with $N=5$ results in $F^2 (\cR_{P,\cE}\circ\cA (|1\rangle\langle1|))= 1- 0.2\gamma^2$. We estimate the analytical expression through a polynomial fit on the numerically generated data in Fig.\ref{fig:multi-cycle_QEC}. However, the multicycle syndrome-based Petz behaves differently on the noisy simulation (see Appendix \ref{app:noise_sim}).

\begin{figure}
    \centering
    \includegraphics[width=1\linewidth]{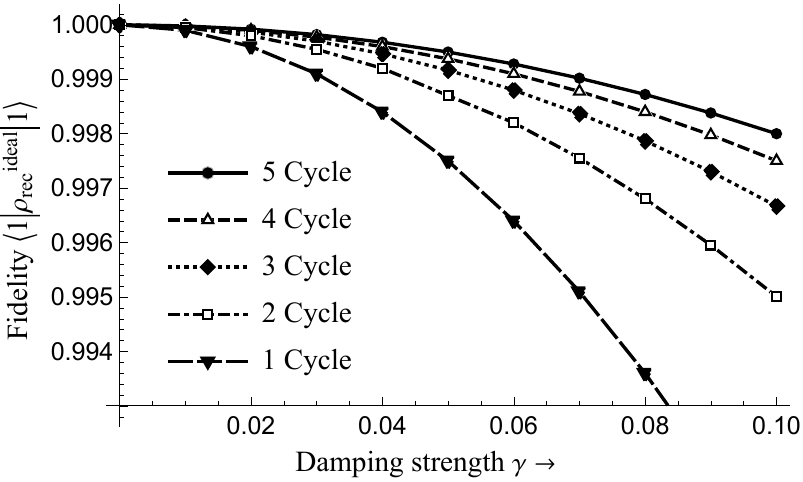}
    \caption{Ideal Simulation of multicycle QEC through Petz map. $\rho_{\rm rec}^{\rm ideal}$ is the recovered state resulted in the ideal simulation.}
    \label{fig:multi-cycle_QEC}
\end{figure}
\begin{figure}
    \centering
    \includegraphics[width= 1\columnwidth]{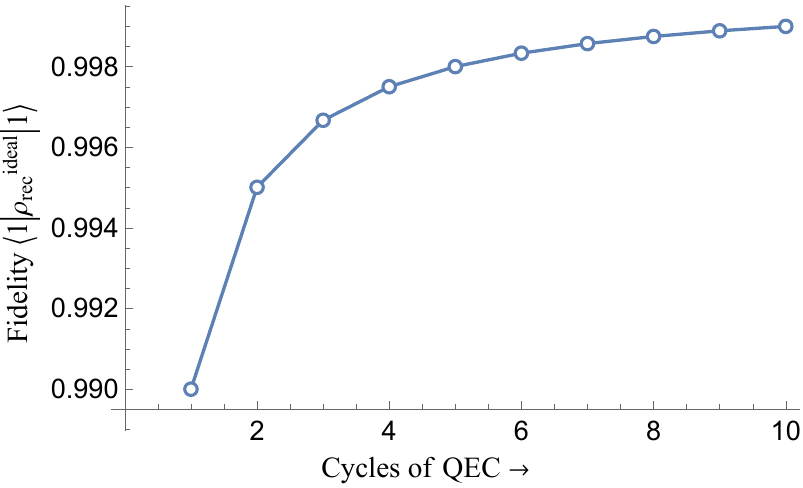}
    \caption{Performance of the syndrome-based Petz map with different cycles for a fixed value of the damping parameter $\gamma=0.1$.}
    \label{fig:fid_petz_cycle}
\end{figure}

\section{Noisy simulations}\label{app:noise_sim}
In Appendix \ref{sec:multi_qec}, we discuss the possible outcomes and the behavior of the multicycle QEC through the map on an ideal simulator. In this section, we aim to study the same on noisy simulators, where we customize the noise of the gates and qubits. On the customized backend, we consider two kinds of setups. In one setup, we keep the noise on the qubit as amplitude damping by fixing \(T_1 = T_2/2\), while in the other setup, we randomly choose the values of \(T_1\) and \(T_2\). To apply the noise on the qubit, we let the qubit evolve in the noisy environment by applying a delay of time \(t\). Since we are performing the \(T_1\) experiment, the considered initial state is \(|1\rangle\), and for any values of \(T_1\) and \(T_2\), the population of the state \(|1\rangle\) follows an exponential decay as \(\exp(-t/T_1)\). The gate error can range from \(10^{-1}\) to \(10^{-3}\) in both setups. The reason for this choice is to study the robustness of our protocols against gate errors, and this simulation also helps us gain insight into how robust our circuit is against hardware noise. However, we stress that our noise model is not time-dependent, whereas the hardware noise is, as pointed out in \cite{gupta2025expedited}. Therefore, the results may not fully match those obtained from noisy simulations.

In carrying out the multicycle QEC, we first note that the recovery procedure takes a finite amount of time. If we customize our noisy backend with gate times and gate errors similar to those of the Heron processors, the recovery takes \(2~\mu s\), while on a backend with gates similar to the Eagles, the syndrome-based Petz recovery takes \(5~\mu s\). Taking into account the recovery time, we consider the following sequence to execute the multicycle QEC:
  
\begin{figure}[t]
    \centering
    \includegraphics[width=1.0\columnwidth]{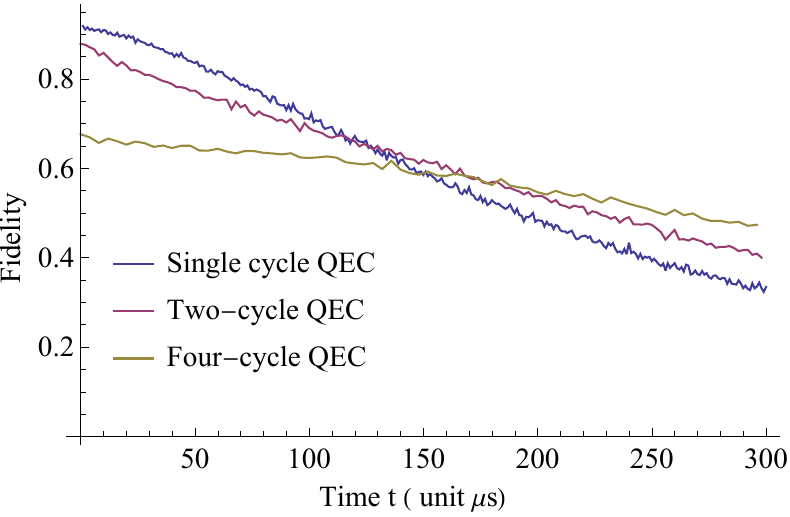}
    \caption{The figure shows the outcome of the $T_1$ experiment on a customized backend with different cycles of the syndrome-based Petz map. Here, the qubits are undergoing decoherence from the amplitude-damping noise. }
    \label{fig:fake_ecr}
\end{figure}
\begin{figure}[t]
    \centering
    \includegraphics[width=1\columnwidth]{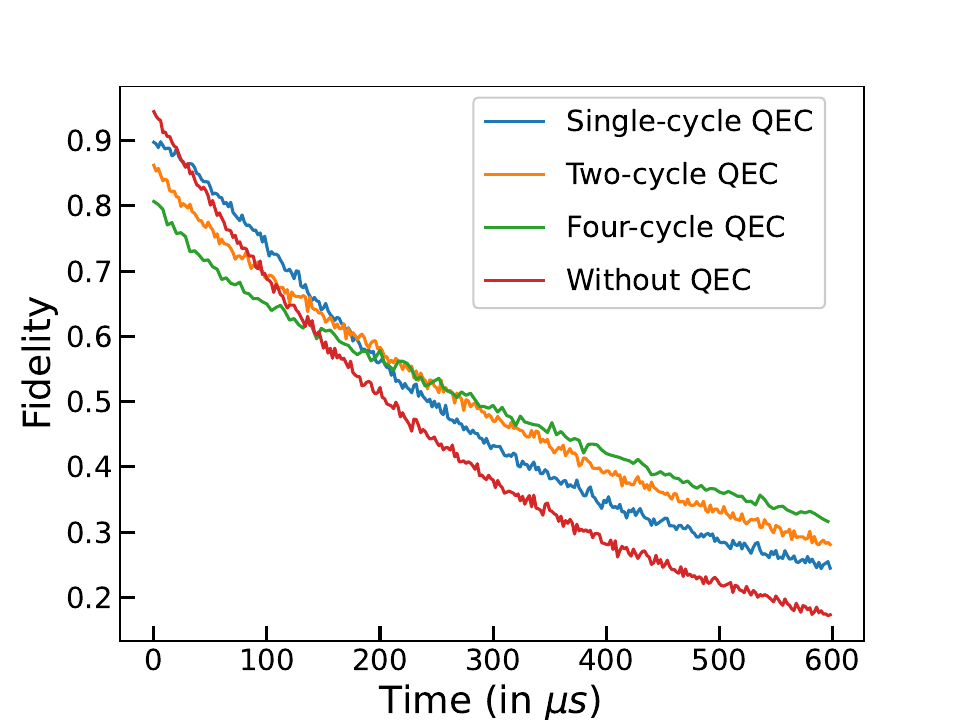}
    \caption{Noisy simulation of the multicycle syndrome-based Petz map, where the noises on each qubit are different and need not be amplitude-damping. }
    \label{fig:noisy_sim}
\end{figure}
\begin{align}
    R_{P,\cE^{t/N-\Delta t}}\circ\cA^{t/N - \Delta t} \circ  R_{P,\cE^{t/N-\Delta t}}\circ\cA^{t/N - \Delta t} \circ \cdots\nonumber \\
    \circ  R_{P,\cE^{t/N-\Delta t}}\circ\cA^{t/N - \Delta t},
\end{align}
where the $\Delta t $ is the time taken by $\cR_{P,\cE^{t/N-\Delta t}}$.
To study the recovery behavior through the syndrome-based Petz, we consider both the single-cycle and multicycle QEC setups as discussed in the previous section. In contrast to the ideal simulation, we observe that the performance of the multicycle recovery deteriorates with an increase in the number of cycles \(N\). However, this deterioration occurs for smaller delay time values \(t\). We also observe that the multicycle QEC performs better than the single-cycle QEC for larger delay times. The different performances of the multicycle QEC are shown in Fig.~\ref{fig:fake_ecr}.  

In generating the data for Fig.~\ref{fig:fake_ecr}, we consider the maximum \(ECR\) error to range between \(\sim 0.01\) and \(\sim 0.006\), with \(T_1 = 155~\mu s\) and \(T_2 = 2T_1\). Moreover, we assume that \(T_1 \,\&\, T_2\) are the same for all qubits. When we consider the overall performance of the multicycle QEC, we observe that the fidelity decays much more slowly than that of the single-cycle QEC.  

We fit the data for the multicycle QEC using the function \(f(t) = a + b \exp[-t^c /T^c]\). For the four-cycle QEC, the improved lifetime is \(T_1 = 2.89~{\rm ms}\), for the two-cycle QEC it is \(T_1 = 2.5~{\rm ms}\), and for the single-cycle QEC, it is \(T_1 = 2.01~{\rm ms}\).
   
\begin{table}[t]
    \centering
    \begin{tabular}{|c|c|c|}
    \hline
         & Max. $\And$ Min. \textsc{ECR} errors & $T_1$ and $T_2$ in $\m s$ \\
         \hline
     Fig.\ref{fig:fake_ecr}    & 0.1 $\And$ 0.001 & $T_1=155$, $T_2= 310$ \\
      Fig.\ref{fig:noisy_sim}    & 0.009 $\And$ 0.005 & $T_1=297$, $T_2= 85$ \\
      \hline
    \end{tabular}
    \caption{Table for the gate noise and the $T_1$, $T_2$ time for the input qubit. For the Fig.\ref{fig:fake_ecr}, we have considered purely amplitude-damping noise for the qubits, but for the Fig.\ref{fig:noisy_sim}, the noise model is from the fake-backend, similar to the Brisbane. For both the figures (\ref{fig:noisy_sim} and \ref{fig:fake_ecr} the errors on the  single qubit gates are of the order of $\sim 10^{-4}$.)}
    \label{tab:fake_table}
\end{table}

On the other hand, we consider the fake backend that mimics the noise of IBM Brisbane and choose the qubit parameters specified in Table~\ref{tab:fake_table}. In this backend, the average measurement error is \(4.44\%\). The \(ECR\) error typically ranges between \(9 \times 10^{-3}\) and \(5 \times 10^{-3}\). We generated a similar plot for the multicycle QEC with this range of gate noise. From the data in Fig.~\ref{fig:noisy_sim}, we observe that the lifetime of the qubit does not improve as much as in the case with purely damping noise. After fitting the numerical data with the function \(f(t) = a + b \exp[-t /T]\), we obtain an improved lifetime of \(T_1 = 562~\mu s\) with four QEC cycles, while with two cycles of QEC, we have an improved lifetime of \(T_1 = 490~\mu s\). A single cycle of QEC improves the lifetime to \(T_1 = 384~\mu s\).  

This result serves as guidance for implementing QEC on hardware. The hardware qubit and gate noise specifications are shown in Table~\ref{tab:tab_fidelity_1}. We see that in the real backend of Brisbane, the maximum and minimum \textsc{ECR} noise values are \(8 \times 10^{-3}\) and \(3 \times 10^{-3}\), respectively, which are lower than those of the mock backend. Hence, a better performance on the actual hardware is expected. Thus, we observe approximately a twofold gain in lifetime on the Eagle processors of IBM.


\section{Fidelity estimation}\label{app:fid_estimation}
Here we outline the procedure for estimating the fidelity using the circuit shown in Fig.~\ref{fig:full_rec}. As mentioned earlier, adding the decoding circuit (the inverse of the encoding circuit) is necessary to keep the runtime below \(10~\mathrm{s}\) on the IBM hardware. Typically, the runtime for the \(T_1\)-experiment circuit in Fig.~\ref{fig:full_rec} is around \(2~\mathrm{s}\) when measuring the input qubit (the second qubit in the circuit). If we perform single-qubit state tomography on the input qubit, the runtime increases to \(10~\mathrm{s}\). Without the decoding circuit, performing four-qubit state tomography directly on the data qubits immediately after recovery would increase the runtime to approximately four minutes.


 \begin{lemma}
 
Let the input state be $|\psi\rangle = |m\rangle$ with $m \in \{0,1\}$. The probability of obtaining the outcome $m$ upon measuring the second qubit in the circuit shown in Fig.~\ref{fig:full_rec} in the $Z$-basis is 
\begin{align}
    {\rm Prob}(|m\rangle) = \langle m| \tilde{\cR} \circ \cA (|m_L\rangle\langle m_L|) |m\rangle.
\end{align}
Moreover, ${\rm Prob}(|m\rangle)$ coincides with the fidelity between the logical state $|m_L\rangle$ and the recovered state $\tilde{\cR} \circ \cA (|m_L\rangle\langle m_L|)$.
\end{lemma}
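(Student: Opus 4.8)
The plan is to propagate the four-qubit register through the four stages of the circuit $\cW = U_{\rm en}^{\dagger}\circ\tilde{\cR}\circ\cA\circ U_{\rm en}$ and show that, because the recovery returns the state to the code space, a computational-basis measurement of the input qubit after decoding reproduces the logical overlap $\langle m_L|\tilde{\cR}\circ\cA(|m_L\rangle\langle m_L|)|m_L\rangle$ exactly.

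First I would fix the conventions of the encoding isometry: the input qubit is prepared in $|m\rangle$ and the remaining qubits in $|0\rangle$, and $U_{\rm en}$ acts as $U_{\rm en}(|m\rangle_{\rm in}\otimes|0\cdots0\rangle)=|m_L\rangle$ for $m\in\{0,1\}$, so that conversely $U_{\rm en}^{\dagger}|i_L\rangle=|i\rangle_{\rm in}\otimes|0\cdots0\rangle$. Feeding $|m\rangle_{\rm in}\otimes|0\cdots0\rangle$ into $\cW$, the first three stages produce the (possibly sub-normalized) state $\rho_{\rm rec}:=\tilde{\cR}\circ\cA(|m_L\rangle\langle m_L|)$.

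The crux of the argument is the observation that $\rho_{\rm rec}$ is supported entirely on the code space $\cC=\mathrm{span}\{|0_L\rangle,|1_L\rangle\}$. This is read off the explicit Kraus operators $\{R_0,\dots,R_4\}$ in Table~\ref{tab:rec_tab}: every $R_k$ is a sum of terms of the form $|i_L\rangle\langle\phi|$, so its range lies in $\cC$, and hence so does the range of $\tilde{\cR}$. Consequently $\rho_{\rm rec}=\sum_{i,j}\langle i_L|\rho_{\rm rec}|j_L\rangle\,|i_L\rangle\langle j_L|$, and decoding with $U_{\rm en}^{\dagger}$ disentangles the ancillas cleanly: $U_{\rm en}^{\dagger}\rho_{\rm rec}U_{\rm en}=\sigma_{\rm in}\otimes|0\cdots0\rangle\langle0\cdots0|$ with $\sigma_{\rm in}=\sum_{i,j}\langle i_L|\rho_{\rm rec}|j_L\rangle\,|i\rangle\langle j|$ acting on the input qubit alone.

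Finally I would compute the measurement statistics: the probability of outcome $m$ on the input qubit is ${\rm Prob}(|m\rangle)=\tr[(|m\rangle\langle m|_{\rm in}\otimes I)\,U_{\rm en}^{\dagger}\rho_{\rm rec}U_{\rm en}]=\langle m|\sigma_{\rm in}|m\rangle=\langle m_L|\rho_{\rm rec}|m_L\rangle$, since the ancilla factor traces to unity. Because $|m_L\rangle$ is a pure state, the fidelity $F^2(|m_L\rangle,\rho_{\rm rec})$ is by definition equal to $\langle m_L|\rho_{\rm rec}|m_L\rangle$, which settles both assertions simultaneously. The hard part will be the support claim of the third step: one must check from the explicit recovery operators that each syndrome branch maps \emph{exactly} into $\cC$, so that the decoding unitary returns the ancillas to $|0\cdots0\rangle$ and the recovered input-qubit population equals the logical fidelity without approximation. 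This is the only place where the detailed structure of the code and of the truncated recovery $\tilde{\cR}$ enters.
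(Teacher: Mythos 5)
Your overall strategy --- propagate $|0m00\rangle$ through $\cW = U_{\rm en}^{\dagger}\circ\tilde{\cR}\circ\cA\circ U_{\rm en}$, argue the decoding disentangles the ancillas, and read the fidelity off the input-qubit population --- mirrors the paper's proof. But your crux claim, that $\rho_{\rm rec}=\tilde{\cR}\circ\cA(|m_L\rangle\langle m_L|)$ is supported entirely on the code space $\cC$ so that $U_{\rm en}^{\dagger}\rho_{\rm rec}U_{\rm en}=\sigma_{\rm in}\otimes|0\cdots0\rangle\langle0\cdots0|$, is false, and the paper's own computation exhibits the failure. The range-of-$R_k$ argument only controls what happens on $\bigcup_k \mathrm{supp}(\Pi_k)$, and since $\sum_{k=0}^{4}R_k^{\dagger}R_k=\sum_k\Pi_k\neq I$ (the recovery is the \emph{truncated} map handling only the no-damping and single-damping syndromes), the trace-preserving circuit necessarily acts on the complementary branches as well --- and the noise populates them. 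Concretely, for $m=1$ the weight-two dampings $D_{1100}$ and $D_{0011}$ map $|1_L\rangle$ to $|0000\rangle$, a state with trivial primary syndrome $(p_1,p_2)=(0,0)$ that the truncated recovery leaves uncorrected; this is exactly the term $\propto\gamma^{2}\,|0000\rangle\langle 0000|$ appearing in Eq.~\eqref{eq:rec_state}, and $|0000\rangle\notin\cC$ since $|0_L\rangle=\tfrac{1}{\sqrt{2}}(|0000\rangle+|1111\rangle)$. So the check you defer to ("each syndrome branch maps exactly into $\cC$") is precisely the step that fails, and with it the clean disentangling of the ancillas.

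The lemma nevertheless holds, but for a reason your argument does not supply: one must track where the leaked, out-of-code-space component goes under the decoder. This is the actual content of the paper's proof. Using the truth table (Table~\ref{tab:truth_table}), $U_{\rm en}^{\dagger}|0000\rangle=\tfrac{1}{\sqrt{2}}(|0000\rangle+|0011\rangle)$, both basis states carrying input (second) qubit $|0\rangle$; moreover the recovered state has no coherence between $|1_L\rangle$ and $|0000\rangle$, since they arise from distinct Kraus branches of $\cA$. Hence the leakage contributes nothing to ${\rm Prob}(|1\rangle)$ and ${\rm Prob}(|1\rangle)=\langle 1_L|\rho_{\rm rec}|1_L\rangle=F^{2}_{|1_L\rangle}$ exactly; an analogous check is done for $m=0$, where the residual weight-$\geq 2$ damping events populate $|1_L\rangle$, which decodes to input qubit $|1\rangle\neq|0\rangle$. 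In short, your reduction ``probability of $m$ equals $\langle m_L|\rho_{\rm rec}|m_L\rangle$'' is valid only on the code-space component of $\rho_{\rm rec}$; the nontrivial part of the proof is the verification that every component produced by noise plus truncated recovery \emph{outside} $\cC$ decodes orthogonally to $|m\rangle$ on the input qubit, and that verification is missing from your proposal.
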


\begin{proof}
To proceed with the proof of the claim of the Lemma, we first note that before the measurement on the input qubit, the circuit operation is equivalent to the following operation:
\begin{align}
 \cW &= U_{\rm en}^{\dagger} \tilde{\cR} \circ \cA (U_{\rm en} |0m00\rangle\langle 0m00| U_{\rm en}^{\dagger}) U_{\rm en}^{\dagger}.
\end{align}
We first prove the claim of the Lemma for $m=1$. Before executing the recovery $\tilde{\cR}$, we note that the noisy state $\cA(|1_L\rangle\langle 1_L|)$ is given by
\begin{align}
   \cA(|1_L\rangle\langle 1_L|) & = (1-\gamma)^2 |1_L\rangle\langle 1_L| + \frac{\gamma(1-\gamma)}{2} \sum\limits_{i=1}^{4} |\psi_i\rangle\langle \psi_i| \nonumber \\
   & \qquad\qquad + 2\gamma^2 |0000\rangle\langle 0000|,
\end{align}
where $|\psi_i\rangle \in \{|0100\rangle, |1000\rangle, |0010\rangle\}$, and these states are the images of the single-qubit damping error operators $\{D_{1000}, D_{0100}, D_{0010}, D_{0001}\}$. For more details, see \cite{leung}. We note that the two-qubit damping errors $\{D_{1001}, D_{0110}, D_{1010}, D_{0101}\}$, the three-qubit errors $\{D_{1110}, D_{0111}, D_{1011}, D_{1101}\}$, and the four-qubit damping error $D_{1111}$ annihilate the state $|1_L\rangle$.  

Now, we note that the recovery operators from Table~\ref{tab:rec_tab} and the error operators $A_i \in \{D_{0000}, D_{0001}, D_{0010}, D_{0100}, D_{1000}\}$ satisfy the following identity:
\begin{align}
   \langle 1_L| R_k N_\ell |1_L\rangle = 0 \quad \forall k \neq \ell.
\end{align}

\begin{table}[t!]
    \centering
    \begin{tabular}{c|c}
    \hline
      Input state & Output of $U_{\rm en}$ \\
      \hline
      $|0100\rangle$ & $|1_L\rangle$ \\
      $|0101\rangle$ & $X_1|1_L\rangle$ \\
      $|0110\rangle$ & $X_4|1_L\rangle$ \\
      $|0111\rangle$ & $|\tilde{1}_L\rangle$ \\
      $|1100\rangle$ & $X_4|\tilde{1}_L\rangle$ \\
      $|1101\rangle$ & $X_1X_4|1_L\rangle$ \\
      $|1110\rangle$ & $X_1|\tilde{1}_L\rangle$ \\
      $|1111\rangle$ & $X_1X_4|\tilde{1}_L\rangle$ \\
      \hline
    \end{tabular} \hspace{0.1cm} 
    \begin{tabular}{c|c}
    \hline
      Input state & Output of $U_{\rm en}$ \\
      \hline
      $|0000\rangle$ & $|0_L\rangle$ \\
      $|0001\rangle$ & $X_1|0_L\rangle$ \\
      $|0010\rangle$ & $X_4|0_L\rangle$ \\
      $|0011\rangle$ & $|\tilde{0}_L\rangle$ \\
      $|1000\rangle$ & $X_4|\tilde{0}_L\rangle$ \\
      $|1001\rangle$ & $X_1X_4|0_L\rangle$ \\
      $|1010\rangle$ & $X_1|\tilde{0}_L\rangle$ \\
      $|1011\rangle$ & $X_1X_4|\tilde{0}_L\rangle$ \\
      \hline
    \end{tabular}
    \caption{Truth table for $U_{\rm en}$. Here, $|\tilde{1}_L\rangle = \frac{1}{\sqrt{2}}(|1100\rangle - |0011\rangle)$, $|\tilde{0}_L\rangle = \frac{1}{\sqrt{2}}(|0000\rangle - |1111\rangle)$, $X_1 = XIII$ and $X_4 = IIIX$.}
    \label{tab:truth_table}
\end{table}

In our circuit in Fig.~\ref{fig:full_rec}, we implement only the recovery $\tilde{\cR}(.) = \sum\limits_{i=2}^{5} R_i (.) R_i^{\dagger}$. We do not include $R_0$ here, since it is an identity operation on the state $|1_L\rangle$ (see Table~\ref{tab:rec_tab}). Therefore, the recovered state is
\begin{align}\label{eq:rec_state}
   \tilde{\cR} \circ \cA(|1_L\rangle\langle 1_L|) &= \big[(1-\gamma)^2 + 4 \frac{\gamma(1-\gamma)}{2}\big] |1_L\rangle\langle 1_L| \nonumber \\
   & \qquad\qquad + 2\gamma^2 |0000\rangle\langle 0000| \\
   &= F^2_{|1_L\rangle} |1_L\rangle\langle 1_L| + 2\gamma^2 |0000\rangle\langle 0000|,
\end{align}
where $F^2_{|1_L\rangle} = F^2(\tilde{\cR} \circ \cA(|1_L\rangle), |1_L\rangle)$.  

From Table~\ref{tab:truth_table}, the encoding unitary $U_{\rm en}$ is
\begin{align}
    U_{\rm en} &= |1_L\rangle\langle 0100| + \sum_j |\phi^1_j\rangle\langle \chi^1_j| + |0_L\rangle\langle 0000| + \sum_j |\phi^0_j\rangle\langle \chi^0_j|,
\end{align}
where $\chi^1_j \in \{0101,0110,0111,1100,1101,1110,1111\}$, $\chi^0_j \in \{0001,0010,0011,1000,1001,1010,1011\}$, and the states $\{|\phi^0_j\rangle, |\phi^1_j\rangle\}$ are the outputs of the states $\{|j_0\rangle, |j_1\rangle\}$, as shown in Table~\ref{tab:truth_table}. We also note that $\langle m_L|\phi_j^m\rangle = 0$ for all $j$ and $m \in \{0,1\}$.  

Therefore, after applying the inverse of the encoding operation, we have
\begin{align}
    U_{\rm en}^{\dagger} (\tilde{\cR} \circ \cA(|1_L\rangle\langle 1_L|)) U_{\rm en} &= F^2_{|1_L\rangle} |0100\rangle\langle 0100| \nonumber \\
    & \qquad + \gamma^2 |0000\rangle\langle 0000|.
\end{align}
Tracing out all qubits except the input qubit, we obtain
\begin{align}
    \rho = \begin{pmatrix}
        \gamma^2 & 0 \\ 0 & F^2_{|1_L\rangle}
    \end{pmatrix}.
\end{align}
Therefore, ${\rm Prob}(|1\rangle) = F^2_{|1_L\rangle}$, which is the fidelity between $|1_L\rangle$ and the recovered state $\cR \circ \cA(|1_L\rangle\langle 1_L|)$. Upon simplifying Eq.~\eqref{eq:rec_state}, we have $F^2_{|1_L\rangle} = 1 - \gamma^2$.  

Similarly, for $m=0$, we have
\begin{align}
   \tilde{\cR} \circ \cA(|0_L\rangle\langle 0_L|) &= F^2_{|0_L\rangle} |0_L\rangle\langle 0_L| + 2 \gamma^3 (1-\gamma) |1_L\rangle\langle 1_L|,
\end{align}
where $F^2_{|0_L\rangle} = |\alpha|^2 (1 + \frac{\gamma^2}{2}) + |\beta|^2 (1-\gamma)^2 + \sqrt{2}(1-\gamma) \mathrm{Re}(\alpha \beta^*)$. After applying the inverse encoding, we obtain
\begin{align}
    U_{\rm en}^{\dagger} \tilde{\cR} \circ \cA(|0_L\rangle\langle 0_L|) U_{\rm en} &= F^2_{|0_L\rangle} |0000\rangle\langle 0000| \nonumber \\
    & \qquad + 2 \gamma^3 (1-\gamma) |0100\rangle\langle 0100|.
\end{align}
Tracing out all qubits except the second qubit, we have
\begin{align}
    \rho = \begin{pmatrix}
        F^2_{|0_L\rangle} & 0 \\
        0 & 2 \gamma^3 (1-\gamma)
    \end{pmatrix}.
\end{align}
\end{proof}

\section{Syndrome detection circuit for the polar decomposition based recovery}\label{app:uni_syn}
\begin{figure}[t!]
    \centering
    \includegraphics[width=1\columnwidth]{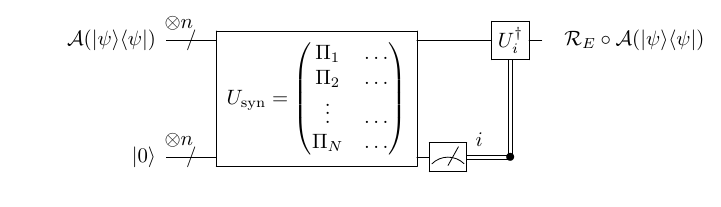}
    \caption{Circuit for the recovery operation $\cR_E$ in Eq.\eqref{eq:rec_new_1}. The $U_{\rm syn}$ is the unitary operation used for the syndrome measurement circuit. Upon measuring the ancillary qubit, which is initialized in $|0\rangle$, we obtained the syndromes, then depending on the measured syndrome outcome $i$, we apply the $U^{\dagger}_i$.   }
    \label{fig:univ_syn}
\end{figure}
So far, we have proposed a universal recovery based on the polar decomposition of the AQEC code. The polar decomposition-based recovery $\cR_E$ requires measuring the projectors $\Pi_k$, which are the projectors onto the orthogonal syndrome subspaces. One way of implementing the recovery $\cR_E$ is to decompose it into a two-outcome POVM \cite{lloyd2001engineering}. Then, one can follow either the binary tree implementation with a single ancilla \cite{shen2017} or the block-encoding and sequential techniques with two ancillary qubitd \cite{biswas2024noise}. However, here we propose a different circuit for the syndrome-based recovery with the AQEC code, which includes explicit syndrome measurements. 
The following Lemma proves the circuit in Fig. \ref{fig:univ_syn} implements the recovery $\cR_E$ in Eq.\eqref{eq:rec_new_1}. 

\begin{lemma}
   Any recovery channel that can be decomposed as a projective measurement projecting onto mutually orthogonal subspaces, followed by a unitary operation as in Eq.\eqref{eq:rec_new_1}, has a circuit as depicted in Fig. \ref{fig:univ_syn}. 
\end{lemma}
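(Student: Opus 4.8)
The plan is to realize the recovery channel $\cR_E$, whose Kraus operators are $R_k = U_k^{\dagger}\Pi_k$ with $\{\Pi_k = U_k P U_k^{\dagger}\}$ a set of mutually orthogonal projectors, as a Naimark/Stinespring dilation of the projective measurement $\{\Pi_k\}$ followed by a classically conditioned unitary. Writing the channel out explicitly, $\cR_E(\rho) = \sum_k U_k^{\dagger}\Pi_k\,\rho\,\Pi_k U_k$, one reads off precisely the operational recipe encoded in Fig.~\ref{fig:univ_syn}: first extract the label $k$ of the orthogonal syndrome subspace, then rotate that subspace back onto the code space via $U_k^{\dagger}$. The content of the lemma is that this two-step recipe can be assembled into a single coherent circuit on system-plus-ancilla.

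First I would introduce an ancillary register with an orthonormal basis $\{|k\rangle\}$ and define the linear map $V$ on the system-plus-ancilla space by $V\big(|\psi\rangle\otimes|0\rangle\big) = \sum_k (\Pi_k|\psi\rangle)\otimes|k\rangle$. The central computation is to verify that $V$ is an isometry on the subspace $\cH\otimes|0\rangle$: using the mutual orthogonality $\Pi_k\Pi_\ell = \delta_{k\ell}\Pi_k$ (which follows from $PU_k^{\dagger}U_\ell P = 0$, Lemma~\ref{lem:UkU_l=0}) one obtains $\langle\psi|\otimes\langle 0|\,V^{\dagger}V\,|\phi\rangle\otimes|0\rangle = \sum_k \langle\psi|\Pi_k|\phi\rangle = \langle\psi|\phi\rangle$ whenever $\sum_k \Pi_k = I$. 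Any isometry defined on a subspace extends to a unitary on the full Hilbert space, and this extension is exactly the syndrome-coupling unitary $U_{\rm syn}$ of the figure.

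Then I would trace the state through the circuit. Initializing the ancilla in $|0\rangle$ and applying $U_{\rm syn}$ sends $\rho\otimes|0\rangle\langle 0|$ to $\sum_{k,\ell}(\Pi_k\rho\Pi_\ell)\otimes|k\rangle\langle\ell|$; a computational-basis measurement of the ancilla then yields outcome $k$ with probability $\tr(\Pi_k\rho)$ and collapses the system to $\Pi_k\rho\Pi_k/\tr(\Pi_k\rho)$. Conditioned on $k$ I apply $U_k^{\dagger}$, and averaging over the discarded measurement record reproduces $\sum_k U_k^{\dagger}\Pi_k\rho\Pi_k U_k = \cR_E(\rho)$, establishing that the circuit implements the channel in Eq.~\eqref{eq:rec_new_1}.

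The main obstacle I anticipate is the unitarity step in the regime where $\cR_E$ is only trace non-increasing, i.e. when $\sum_k \Pi_k \le I$ strictly --- which is the generic situation here, since the syndrome subspaces need not span the whole Hilbert space. In that case the map $V$ above is still an isometry, but I must append one additional ``no-syndrome'' branch $|k_\perp\rangle$ carrying the complementary projector $I - \sum_k \Pi_k$ so that the dilation closes to a genuine unitary $U_{\rm syn}$; the conditional recovery on that branch may be taken trivial (or mapped to a fixed code state) without affecting the action on the support of the $\{\Pi_k\}$. Apart from this bookkeeping, the argument is the standard Naimark realization of a projective measurement, and the only input specific to $\cR_E$ is the orthogonality of the $\Pi_k$, which is precisely what the orthogonalisation algorithm of Sec.~\ref{sec:algorithm} guarantees.
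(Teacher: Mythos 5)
Your proposal is correct and follows essentially the same route as the paper: both realize $\cR_E$ via the Naimark/Stinespring dilation $|\psi\rangle\otimes|0\rangle \mapsto \sum_k \Pi_k|\psi\rangle\otimes|k\rangle$ (the paper writes this as the isometric block column $U_{\rm iso}$ embedded in $U_{\rm syn}$), followed by a computational-basis measurement of the ancilla and the classically conditioned correction $U_k^{\dagger}$. If anything, your explicit verification of the isometry property and your appended ``no-syndrome'' branch carrying $I-\sum_k\Pi_k$ handle the trace non-increasing case more carefully than the paper, which asserts the isometric extension without addressing that $\sum_k\Pi_k \le I$ may be strict.
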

\begin{proof}
   We first note that to implement the recovery $\cR_E$, one needs to measure the syndromes by measuring the operators $P_k$. The operators $P_k$ satisfy the following inequality:
\begin{align}
\sum_k \Pi_k \leq I & \quad, \Pi_k = \Pi_k.\Pi_k.
\end{align}
We also note that the $\Pi_k$s are the projectors onto the orthogonal syndrome subspaces. To implement each $\Pi_k$, we can follow the block-encoding of $\Pi_k$, then implement $R_k$, measure the ancilla, and finally implement $U_k^{\dagger}$ if the ancilla is in $|0\rangle$.

To implement $\cR_E$, we can first design the isometric extension for each $\Pi_k$ as
\begin{align}
U_{\Pi_k}& = \begin{pmatrix}
P_k & * \\
\sqrt{I - \Pi_k} & *
\end{pmatrix},
\end{align}
and moreover follow the POVM-based circuit construction proposed in \cite{biswas2024noise}. The former block encoding-based procedure is probabilistic, while the POVM-based procedure is deterministic but approximate. Here, we propose a new method to implement $\cR_E$ through the circuit in Fig.\ref{fig:univ_syn}.

Firstly, we note that the operators $\Pi_k$, the projectors onto subspaces generated by the Kraus operators ${E_kP}$, form a CP but trace non-increasing map $\cP$. Therefore, for an [[n,k]] code, the $\Pi_k$s are of dimension $2^n \times 2^n$, and we can construct an isometric extension of the channel $\cP$ as follows:
\begin{align}
U_{\rm iso} &= \begin{bmatrix}
\Pi_1 \\
\Pi_2 \\
\vdots \\
\Pi_N
\end{bmatrix}_{N 2^n \times 2^n}.
\end{align}

To implement this isometric extension, we consider the following joint unitary operation between the system (the $n$ encoded data qubits) and the $n$-qubit ancilla, which is initialized in $|0\rangle^{\otimes n}$:
\begin{align}
U_{\rm syn} = \begin{pmatrix}
\Pi_1 & \hdots \\
\Pi_2 & \hdots \\
\vdots & \hdots\\
\Pi_N & \hdots
\end{pmatrix}_{2^n N \times 2^n N}.
\end{align}

To implement $U_{\rm syn}$, one can follow the algorithm outlined in Ref.\cite{biswas2024noise}. After implementing $U_{\rm syn}$, we measure the ancilla qubits, which are initialized in the $|0\rangle^{\otimes n}$ state, in the $Z$ basis (the ${0,1}$ computational basis). After the measurement, if the ancilla collapses to the $|k\rangle$-th state in the computational basis, the encoded data block is projected onto the following state:
\begin{align}
\rho_k & = \frac{\Pi_k \cA(|\psi\rangle\langle\psi|)\Pi_k}{\tr [\Pi_k\cA(|\psi\rangle\langle\psi|)]}.
\end{align}

Thus, the measurement of the ancilla projects the noisy state $\cA(|\psi\rangle\langle\psi|)$ onto the $k$-th syndrome subspace. The quantity $\tr [\Pi_k\cA(|\psi\rangle\langle\psi|)]$ is the probability of detecting the $k$-th syndrome. After obtaining the $k$-th outcome, we apply $U_k^{\dagger}$. Therefore, the final state is:
\begin{align}
\Tilde{\rho}_k &= U_k^{\dagger} \rho_k U_k.
\end{align}

Summing over all possible measurement outcomes on the ancilla qubits, the recovered state is:
\begin{align}
\cR_E\circ \cA(|\psi\rangle\langle\psi|)= \sum U_k^{\dagger} \Pi_k \cA(|\psi\rangle\langle\psi|) \Pi_k U_k.
\end{align}
\end{proof}
For stabilizer codes, $U_{\rm syn}$ is always a Clifford operation and can be simulated efficiently (in polynomial time). However, for non-stabilizer codes, the $U_{\rm syn}$ circuit may contain some non-Clifford gates.
\end{document}